%% file: arxiv.tex
\documentclass[article]{llncs}
\usepackage{fullpage}[in]

\input{packages}

\title{Liquid democracy under vote correlation} 
\subtitle{On the fallacies of averaging and the excluded middle}

\titlerunning{Liquid Democracy under vote correlation}
\author{Seth Gilbert\inst{1}, Stefan Schmid\inst{2}, Santiago Schnell\inst{3}, Jakub Svoboda\inst{3}, and Michelle Yeo\inst{4,5}}

\institute{
  National University of Singapore, Singapore \and
  TU Berlin,  Germany \and
  Dartmouth College, USA \and
  Aarhus University, Denmark \and
  Nanyang Technological University, Singapore
}

\authorrunning{S. Gilbert, S. Schmid, S. Schnell, J. Svoboda, and M. Yeo}

\begin{document}

\maketitle

\begin{abstract}
Liquid democracy permits voters either to vote directly or to delegate their votes to other voters. Existing algorithmic analyses usually assign each voter a single scalar competence parameter, interpreted as an independent probability of voting for the ground truth. 
This representation is inadequate when delegation is fixed before later public information---for example, an audit, announcement, or misleading report---changes different voters' reliability in different ways.

In this work, we study a minimal common-signal model of this realistic phenomenon. Delegation occurs before a public binary signal is realised, while voting occurs afterwards. Conditional on the signal, sink votes are independent and each voter has a signal-specific competence; marginally, correctness events are correlated through the common signal. We show that delegation based on average competence is not a safe scalarisation: average-based delegation can violate do-no-harm, and a beneficial delegation rule may send votes to voters with lower average competence.

We then present and analyse three novel delegation mechanisms for this setting. 
The first is a conservative intersection mechanism that delegates only to neighbours whose competence exceeds the delegator's by a prescribed margin in every signal state; we show that the conservative intersection mechanism inherits all the guarantees of delegation in the scalar competence setting. 
The next mechanism is a confounded-set mechanism that enlarges the delegation pool to neighbours who are favoured in one state and worse by at most a prescribed tolerance in the other. For this mechanism we prove immediate and terminal expected-margin bounds and potential-function acyclicity, and explicity identify the weight-dispersion and mechanism-concentration conditions needed to obtain majority-correctness guarantees. Finally, on bounded-in-degree graphs, a multi-round certified-path mechanism propagates nonnegative two-dimensional path certificates; it is acyclic, yields statewise terminal competence improvement, and gives uniform bounds on path length and terminal voting weight.
\keywords{Liquid democracy, Voting, Delegation, Mechanism design, Distributed algorithms} 
\end{abstract}

\section{Introduction}
Liquid democracy is a hybrid voting paradigm in which voters may either vote directly or delegate their vote to another voter, with delegation being transitive.
Recent years have seen a surge in the use and study of liquid democracy, notably in blockchain decentralised autonomous organisation (DAO) voting~\cite{HallM24,FeichtingerFVW23,FritschMW24,SchmidS24} and political parties~\cite{KlingKHSS15,paulin20}.
In epistemic binary voting problems, where one of two alternatives is objectively correct, the normative appeal of delegation is straightforward: a less informed voter should be able to transfer voting weight to a better informed voter, thereby increasing the probability that the collective decision is correct.
The mathematical difficulty is that delegation changes not only the expected number of correct votes but also the \emph{distribution} of voting weights: concentrating many votes on a small number of sinks can increase apparent competence while reducing the reliability of the weighted majority~\cite{ChatterjeeG0SY25variance,KahngMP21}.

Most existing analyses represent a voter's competence by a single scalar, namely the probability that the voter selects the correct alternative. 
This scalar model is natural when correctness events are independent, or when all relevant uncertainty can be compressed into one parameter. In many decision environments, however, delegation is committed before later public information becomes available. A DAO may open a delegation window before an audit or security report is released; a political or organisational vote may occur after a public announcement, a change in issue framing, or a widely circulated item of misinformation. 
Such common events can affect groups differently, making one group more reliable and another less reliable. Political-science evidence also indicates that uncertainty and political context can alter voter preferences and indecision~\cite{Christensen_2022}. 
In these settings a voter is more naturally represented by a vector of signal-conditional competences than by a scalar average.

We study the simplest tractable instance of this phenomenon. 
Delegation takes place among voters before a binary public signal is realised, and the delegation induces a delegation graph where each outgoing edge between two voters represents vote delegation. Voting takes place after the signal among the sinks of the delegation graph. 
Conditional on the realised signal, sink votes are independent, but each voter has a signal-specific probability of voting correctly. We stress that the binary signal is not intended to model every source of correlation. 
Rather, it is a minimal common-cause model that preserves conditional independence while exposing the algorithmic consequences of state-dependent competence.

The main question we seek to address in our work is the following:

\smallskip
\emph{Given the positive results for delegation in the limited scalar-competence model, can we develop efficient delegation strategies in this new common-signal setting that achieve the same delegation desiderata as in the scalar-competence setting?}
\smallskip

We note that the extension from the scalar-competence model to our common-signal model is neither trivial nor automatic. 
In particular, four new difficulties arise. First, competence vectors are only partially ordered, and averaging them can lead to harmful delegations. 
Second, delegation may improve competence in one signal state while reducing it in the other, so scalar monotonicity and delegation acyclicity no longer follow. Third, an immediate expected improvement must be propagated through a random transitive delegation chain to the terminal sink. Fourth, as in the scalar model, majority correctness requires control of terminal voting weights and not merely an increase in mean competence. These are the technical issues addressed by the mechanisms and proofs below.

\subsection{Our contribution}
\smallskip{\noindent{\bf Common-signal model.}}
We introduce a common-signal model for correlated correctness in liquid democracy and define ex ante and statewise (that is, conditional on the signal state) versions of majority correctness, do-no-harm, and positive gain. 
Majority correctness decides the voting outcome, while do-no-harm and positive gain are classic desiderata that delegation mechanisms ought to satisfy~\cite{KahngMP21,HalpernHJMPR23,ChatterjeeG0SY25variance}.
Here, we emphasise that the statewise formulation is essential in our setting: a mechanism may be harmless after averaging over the signal prior while still being harmful conditional on one signal state.

\smallskip{\noindent{\bf The fallacy of averaging.}}
Our first technical result is negative. Delegation according to prior-weighted average competence is not a valid reduction from our signal-dependent model to the scalar model. 
We give two complete-graph constructions to highlight our negative result: average-based delegation can violate do-no-harm, and a beneficial delegation rule may send votes from higher-average voters to lower-average voters. The relevant object is therefore the geometry of the signal-conditional competence vector, not its scalar average.

\smallskip{\noindent{\bf Statewise dominance and subset-robust scalar reductions.}}
For each signal state $b$, a neighbour is \emph{favoured} when its competence exceeds the delegator's by at least a threshold $\alpha_b>0$. 
It is known in the scalar-competence setting that delegation mechanisms which satisfy the delegation desiderata invoke a process whereby voters check if a neighbour is favoured and then delegate to the favoured neighbour~\cite{ChatterjeeG0SY25variance}.
Our first mechanism replicates the spirit of the delegation mechanisms in the scalar-competence setting and delegates only to neighbours who are favoured in \emph{every signal state}. Conditional on state $b$, every selected edge is therefore a valid scalar approved edge. We formulate precisely the subset-robustness condition needed to transfer a scalar theorem, because the intersection mechanism samples from a subset of the full scalar approved set. The mechanism is most useful when relative expertise is stable across signal states---for example, when a domain expert remains reliably more accurate regardless of whether a later public report is accurate or misleading.

\smallskip{\noindent{\bf Confounded sets and the neutral middle.}}
An issue with the conservative intersection mechanism is that the pool of potential delegates is exactly voters which are favoured by others in both states.
This delegation pool size, however, can be insufficient to ensure we achieve the delegation desiderata in some cases.
We thus present our second delegation mechanism which aims to widen the pool of delegation candidates, while specifying additional conditions to achieve the delegation desiderata as in the scalar-competence setting.
For each signal state $b$, we additionally call a neighbour \emph{confounded} when its competence is worse than the delegator by more than a tolerance $\beta_b\ge0$; neighbours that are neither favoured nor confounded are \emph{neutral}. The phrase ``excluded middle'' is descriptive rather than a reference to the logical law of excluded middle: the error is to exclude the neutral middle region by treating ``not favoured'' as synonymous with ``confounded''. 
Our second mechanism uses neighbours who are favoured by at least $\alpha_b$ in one state and not confounded (i.e., not worse by at most $\beta_{1-b}$) in the other. We prove immediate expected-margin bounds, terminal expected-margin bounds by induction along a potential order, and potential-function acyclicity. We then state explicitly the additional weight-dispersion and mechanism-randomisation concentration hypotheses needed to convert these structural improvements into majority-correctness guarantees.

\smallskip{\noindent{\bf Certified paths with communication.}}
Our third mechanism is a bounded-radius distributed mechanism for bounded-in-degree graphs. It propagates nonnegative two-dimensional path certificates through favoured and neutral neighbours, thereby further widening the delegation pool by allowing delegation to fully neutral neighbours (disallowed in the confounded set and intersection mechanisms). We prove that every realised delegation graph is acyclic, every delegating voter terminates at a sink that is more competent in both signal states, and bounded in-degree yields uniform bounds on path length and terminal voting weight. The price of widening the delegation pool is additional communication rounds and a conservative coordinatewise comparison that may leave incomparable certificates unused.

\begin{table}[htb!]
\scriptsize
    \begin{tabular}{|c|c|c|}
    \hline
     Mechanism & Information and candidate set & Guarantee and trade-off  \\ \hline \hline
    Intersection (Alg.~\ref{alg:base})  & One round; $J^0(i)\cap J^1(i)$ & \makecell{Inherits subset-robust scalar guarantees statewise. Best for stable \\ statewise rankings;  may delegate little when rankings reverse.} 
    \\ \hline 
    Confounded set (Alg.~\ref{alg:two_dimensions}) & \makecell{One round; \\$J^b(i)\setminus K^{1-b}(i)$, mixed across states} & \makecell{Positive immediate and terminal  expected margins and acyclicity.\\ Uses a larger pool, but majority correctness  needs separate \\ weight and concentration control.}  \\ \hline
    Certified path (Alg.~\ref{alg:more_communication})  & \makecell{$T$ rounds; \\paths of admissible favoured/neutral edges } & \makecell{Acyclicity, statewise terminal improvement, and bounded  sink \\ weights on bounded-in-degree graphs. Costs communication \\ and may discard incomparable certificates.} \\ \hline
     \hline
    \end{tabular}
    \label{tab:mechanism-comparison}
    \caption{Qualitative comparison of the three mechanisms. ``One round'' means that a voter uses only labels on its immediate out-neighbours.}
\end{table}

\subsection{Related work}
\smallskip{\bf Algorithmic liquid democracy.}
Liquid democracy and delegated voting are well-studied~\cite{Alger2006,AmanatidisFLMP24incomplete,BloembergenGL19,ChatterjeeGSSY26opposing,Miller1969APF,Tullock1992}. On the algorithmic front, a central question is when delegated voting improves on direct voting in the epistemic binary-decision model~\cite{ButterworthB23,CaragiannisM19,CohensiusMMMO17,GreenArmytage2015}.
Kahng et al.~\cite{KahngMP21} show an impossibility for local delegation mechanisms on unrestricted graph topologies: no such mechanism can always be at least as accurate as direct voting while sometimes being strictly better. Halpern et al.~\cite{HalpernHJMPR23} obtain positive results on complete graphs. The closest predecessor is~\cite{ChatterjeeG0SY25variance}, which studies when local delegation is possible on natural graph classes and emphasises that delegation changes both mean competence and the variance of the weighted vote. The present work changes the competence model itself and addresses the vector-ordering, acyclicity, terminal-propagation, and concentration difficulties described above; it is not obtained by simply running the scalar proof twice.

\smallskip{\noindent\bf Correlated epistemic voting.}
The common-signal model is a simple common-cause model of correlated voter correctness and is related to Condorcet-type jury theorems with correlated voters. Pivato~\cite{pivato17} develops a general theory of epistemic democracy with correlated voters and shows that asymptotic correctness can persist when average covariance becomes sufficiently small. Our model is more specialised: conditional on a finite public signal, correctness events are independent, but unconditionally they are correlated. This specialisation permits the design of local mechanisms for signal-dependent competence vectors. Other work generalises Condorcet's jury theorem under specific dependence models~\cite{boland89,dietrichK13,ladha1992,LADHA1995}; Berg~\cite{berg93} compares positive and negative juror correlation.

\smallskip{\noindent\bf Information cascades and herding.}
The timing of a common signal after delegation and before voting is structurally related to information-cascade and herding models~\cite{banerjee92herding,BikhchandiHW92cascades,welch92cascades}. In those models, private signals are aggregated through sequential observable actions. Here the signal is common and external, delegation is fixed before it arrives, and only the later vote is signal-dependent.

\smallskip{\noindent\bf Computational social choice.}
Related extensions of liquid democracy include ranked delegations~\cite{brill22ranked}, ordinal elections~\cite{brillT18}, and issue-specific delegates~\cite{ChristoffG2017BinaryVW}. Caragiannis and Micha~\cite{CaragiannisM19} show in the scalar binary model that delegating to someone better informed on average need not be optimal and that approximately optimal delegation is generally NP-hard. Our lower-average construction exhibits an analogous ordering failure caused specifically by common-signal correlation.

\smallskip{\noindent\bf Epistemic democracy.}
The algorithmic study of whether delegated voting improves collective decisions is grounded in the philosophical literature on democracy as truth-discovery~\cite{estlund08,GoodinS18}.

\section{Model}\label{sec:model}

\subsection{Correlated delegated voting}\label{sec:model_correlations}

\smallskip{\bf Voters and awareness graph.}
We consider $n$ voters, $V = \{v_1, \dots, v_n\}$, that want to decide on some binary issue, represented by options $\{0,1\}$, by voting. 
We assume the existence of a ground truth with regard to the binary question that is unknown to the voters, which we denote by voting option $1$.
Voters are connected by a directed awareness graph $(V,E)$ where an edge $(v_i,v_j) \in E$ represents that voter $v_i$ is aware of another voter $v_j$. 
The \emph{neighbourhood} of a voter $v_i$, denoted by $Ne(v_i)$, is the set of voters that are directly connected to $v_i$, i.e., $Ne(v_i) := \{v_j \mid (v_i, v_j) \in E\}$.

\smallskip{\noindent{\bf Competences and external signal.}}
Although the voters do not know the ground truth outcome, each voter has some probability of voting for the correct option (i.e., option $1$), which we call the \emph{competence} of the voter.
In our setting, we assume that the competences of voters are correlated. 
Specifically, we assume a simple model of competence correlation based on an external, public, binary signal $s \in \{0,1\}$ that arrives \emph{in between} the vote delegation and actual voting processes (see the timeline depicted in~\Cref{fig:timeline}).
We use $\pi_b=\mathbb{P}[s=b]$ for $b\in\{0,1\}$, with $\pi_0+\pi_1=1$, to denote the signal prior.
The signal can represent a later public report, announcement, framing, or item of misinformation whose effect differs across voters. We represent its potential effect by two vectors $\mathbf{p}^0 = [p^0_1, p^0_2, \dots, p^0_n]$ and $ \mathbf{p}^1 = [p^1_1, p^1_2, \dots, p^1_n]$, where the $i$th element of $\mathbf{p}^0$ (resp. $\mathbf{p}^1$) represents the independent competence of voter $v_i$ when the value of the signal is $0$ (resp. $1$).

\begin{definition}[Problem instance]
    A problem instance is a tuple $G=(V, E,\mathbf{p}^0, \mathbf{p}^1, \pi)$, where $(V,E)$ is the directed awareness graph, $\mathbf{p}^0, \mathbf{p}^1$ are signal-conditional competence vectors, and $\pi = (\pi_0,\pi_1)$ is the signal prior.
\end{definition}

\begin{definition}[Average competence]\label{def:average}
The (prior-weighted) average competence of a voter \(v_i\) is
$$\bar p_i=\pi_0p_i^0+\pi_1p_i^1.$$

In the equiprobable case this reduces to \(\bar p_i=(p_i^0+p_i^1)/2\).
\end{definition}

\begin{figure}[htb!]
    \centering    
    \includegraphics[width=0.8\linewidth]{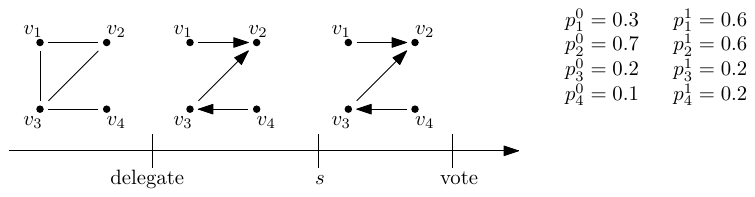}
    \caption{Timeline and information structure for the common-signal model. Delegation is fixed using the awareness graph before the public signal $s$ is realised; the signal then selects the relevant statewise competence vector, and the sinks of the delegation graph cast the weighted vote.}
    \label{fig:timeline}
\end{figure}

\smallskip{\noindent{\bf Vote delegation.}}
Each voter can delegate its vote to any single voter it is directly connected to in the awareness graph $G$, and this other voter can then transitively delegate all of its votes to someone else or vote directly. We formalise this process with the definition of a delegation mechanism as follows.

\begin{definition}[Delegation mechanism]
    A delegation mechanism $M$ maps a problem instance \(G\), together with any internal randomization of the mechanism, to a directed delegation graph \(G'=M(G)\) on vertex set \(V\). We restrict attention to mechanisms for which \(G'\) has out-degree at most one at every vertex and is acyclic. A vertex with out-degree zero in \(G'\) is called a sink. Let \(T(G')\) denote the set of sinks.

    For each sink \(v_i\in T(G')\), let \(w_i(G')\) be the number of voters whose delegation chain terminates at \(v_i\), including \(v_i\) itself. Thus \(\sum_{v_i\in T(G')}w_i(G')=n\).
\end{definition}

The final decision is made based on weighted majority vote among the sinks in the delegation graph $G'$, breaking ties in favour of the ground truth voting outcome.

\begin{definition}[Weighted majority correctness]
Fix a delegation graph \(G'\) and a signal value \(b\). For each sink \(v_i\in T(G')\), let

  $$X_i^b\sim \operatorname{Bernoulli}(p_i^b)$$

be the indicator that the sink \(v_i\) votes for the correct alternative, with the variables \(\{X_i^b:v_i\in T(G')\}\) independent conditional on \(s=b\). The weighted majority is correct if
$$\sum_{v_i\in T(G')} w_i(G')X_i^b \ge \frac{n}{2}.$$

\end{definition}


\smallskip{\noindent{\bf Local mechanisms and label oracles.}}
We distinguish the \emph{epistemic instance} used to analyse correctness from the \emph{local information} available to our algorithms. The vectors $\mathbf p^0,\mathbf p^1$ are latent parameters of the epistemic model; a voter does not observe their numerical values, the ground truth, or the total number of voters.

Instead, each voter $v_i$ receives a finite label for each out-neighbour from an exogenous local label oracle,
$$\mathcal I_i(G)=\bigl(Ne(v_i),\ell_i,\prec_i\bigr),$$
where $\ell_i(v_j)$ is a finite-valued label and $\prec_i$ is a fixed local tie-breaking order or source of local randomness. Locality refers to the computation performed \emph{after} these labels are supplied: the decision of $v_i$ may depend only on labels attached to vertices in $Ne(v_i)$ and on local randomness. It does not assert that the labels themselves can be statistically learned from one-hop information.

Looking ahead, in \Cref{sec:mechanisms} the label $\ell_i(v_j)$ consists of four comparison bits indicating whether $v_j$ is favoured or confounded relative to $v_i$ in each signal state. These bits are generated from the latent competence vectors for the purpose of analysis, but the algorithms never query the numerical values $p_i^b$. This is the same modelling separation as the approved-neighbour information primitive in the scalar setting~\cite{ChatterjeeG0SY25variance}. In an implementation, labels could be supplied by a calibration, reputation, or certification layer; learning reliable labels is outside the scope of this paper and is discussed as future work.

\begin{example}[Delegation based on favoured set size]\label{ex:setsize}
Suppose the local oracle gives voter $v_i$ a set $J(i)\subseteq Ne(v_i)$ of favoured neighbours. A local mechanism may check whether $|J(i)|$ exceeds a function of the local degree and, if so, delegate to a uniformly random member of $J(i)$; otherwise the voter votes directly. The scalar mechanism studied in the independent-competence setting is the special case
$$J(i)=\{v_j\in Ne(v_i):p_j-p_i\ge\alpha\}.$$
\end{example}

\begin{example}[Delegation based on average competence]\label{ex:average}
The average-competence label oracle declares $v_j\in Ne(v_i)$ favoured whenever $\bar p_j>\bar p_i$. The corresponding mechanism delegates uniformly to such a neighbour when the set is nonempty and otherwise votes directly. This example is used only as a negative benchmark in \Cref{sec:strawman}.
\end{example}

\begin{example}[Direct voting]
Let $D$ be the mechanism that never delegates. Every voter is a sink of weight one. 
\end{example}

\smallskip\noindent{\bf Probability of correct outcome.}
We now define the probability of achieving the correct outcome given an input problem instance $G$ and a delegation mechanism $M$. Due to the signal prior in our setting, we distinguish between \emph{statewise} and \emph{ex ante} success probabilities: the former measures the success probability of $M$ given a specific signal state $b$, and the latter measures the expected success probability of $M$ over all signal states.

\begin{definition}[Statewise and ex ante success probabilities]
For a mechanism \(M\), define the statewise success probability
$$ P_M^b(G)=\mathbb{P}\left(
  \sum_{v_i\in T(M(G))}w_i(M(G))X_i^b\ge \frac{n}{2}
  \;\middle|\; s=b
  \right),$$
where the probability is over the mechanism's internal randomisation and over the conditional Bernoulli votes of the sinks. The ex ante success probability is
$$P_M(G)=\sum_{b\in\{0,1\}}\pi_b P_M^b(G).$$

Let \(D\) denote direct voting, i.e., the mechanism for which every voter is a sink with weight one. Define
$$\gain_b(M,G)=P_M^b(G)-P_D^b(G)$$
and
$$\gain(M,G)=P_M(G)-P_D(G)=\sum_{b\in\{0,1\}}\pi_b\gain_b(M,G).$$
\end{definition}

We use the term \emph{loss} to denote negative gain.





\subsection{Delegation desiderata}\label{sec:desiderata}
We define some important desiderata that our delegation mechanisms need to satisfy. 

\smallskip\noindent{\bf  Do-no-harm (DNH).}
The first desideratum we want our mechanisms to satisfy is do-no-harm. Informally, this property 
ensures that as our restricted problem instances grow in size, the loss of our mechanisms compared to direct voting goes to 0. We extend the notion of DNH in the scalar competence model to account for the signal prior in our common signal model by defining two notions of DNH: ex ante and statewise. 

\begin{definition}[Ex ante DNH]
Let \(\mathcal{G}_n\) be a class of problem instances with \(n\) voters. A mechanism \(M\) satisfies ex ante DNH on \((\mathcal{G}_n)_{n\ge1}\) if
$$\liminf_{n\to\infty}\ \inf_{G\in\mathcal{G}_n}\ \gain(M,G)\ge 0.$$
  
Equivalently, for every \(\varepsilon>0\), there exists \(n_0\) such that for all \(n\ge n_0\) and all \(G\in\mathcal G_n\),
$$\gain(M,G)>-\varepsilon.$$
  
\end{definition}

\begin{remark}
The infimum is intentional. Even for fixed $n$, the competence vectors vary continuously, so $\mathcal G_n$ is generally infinite and the worst-case gain need not be attained. A minimum may replace the infimum only under additional compactness and attainment assumptions.
\end{remark}

\begin{definition}[Statewise DNH]
A mechanism \(M\) satisfies statewise DNH on \((\mathcal{G}_n)_{n\ge1}\) if, for each \(b\in\{0,1\}\),
$$\liminf_{n\to\infty}\ \inf_{G\in\mathcal{G}_n}\ \gain_b(M,G)\ge 0.$$
\end{definition}
Note that statewise DNH implies ex ante DNH for every signal prior \(\pi\).


\smallskip\noindent{\bf Positive gain (PG)}
The second desideratum that we want our mechanisms to achieve is positive gain. Positive gain ensures that as problem instances grow in size there are some instances where delegated mechanisms perform better (i.e., achieve positive gain) compared to direct voting. 

\begin{definition}[Positive gain]\label{def:pg}
A mechanism \(M\) satisfies positive gain on \((\mathcal{G}_n)_{n\ge1}\) if there exist constants \(\gamma>0\) and \(n_0\) such that for every \(n\ge n_0\), there exists an instance \(G\in\mathcal{G}_n\) satisfying
$$\gain(M,G)\ge \gamma.$$
\end{definition}

A statewise version of positive gain is obtained by replacing \(\gain\) with \(\gain_b\) for a specified signal value \(b\), or by requiring the inequality for every \(b\).

\smallskip\noindent{\bf Strong positive gain (SPG)}
The positive-gain definition in~\Cref{def:pg} is existential. To record how much a mechanism helps when it delegates at a non-negligible scale, we use the following delegation-rate profile. 
We stress that this is deliberately a best-case profile; a worst-case high-delegation guarantee would require replacing the supremum below by an infimum and is not claimed here.

\begin{definition}[Delegation-rate gain profile]\label{def:gain-profile}
For $\rho\in(0,1)$, define
$$\Gamma_n(\rho)=
\sup\Bigl\{\gain(M,G):G\in\mathcal G_n,\;
\mathbb E[\#\{\text{voters who delegate under }M\}]\ge \rho n\Bigr\}.$$
The mechanism has \emph{existential strong positive gain} at delegation rate $\rho$ if
$$\liminf_{n\to\infty}\Gamma_n(\rho)>0.$$
\end{definition}

\begin{remark}
The restriction $\rho<1$ is necessary for finite acyclic delegation graphs: at least one vertex must be a sink, so not all voters can delegate. In the following sections, when we say that a mechanism has strong positive gain, we mean the existential version in \Cref{def:gain-profile}. Any statement asserting a uniform guarantee over all high-delegation instances will be stated separately.
\end{remark}

\section{Local delegation mechanisms}\label{sec:mechanisms}

\subsection{Why average competence is not a safe reduction}\label{sec:strawman}
Let $M$ denote the average-competence delegation mechanism from~\Cref{ex:average}: voter $v_i$ delegates to a neighbouring voter $v_j$ only when $\bar p_j>\bar p_i$. The next two propositions show that this scalarisation is not a valid reduction from the common-signal model to the scalar-competence model. Exact finite-size illustrations appear in~\Cref{fig:sanity-average-failure,fig:sanity-lower-average}.

\begin{proposition}[Average-based delegation can violate DNH]\label{prop:average-violates-dnh}
Assume $\pi_0=\pi_1=1/2$. For every sufficiently large $n$, there is a complete-graph instance with $n$ voters for which direct voting succeeds with probability tending to $1$ as $n\to\infty$, while the mechanism that delegates to a neighbour of strictly larger average competence succeeds with probability $0.7$. Consequently, average-based delegation violates ex ante do-no-harm on complete graphs.
\end{proposition}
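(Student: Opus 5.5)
We prove the proposition with an explicit complete-graph construction. In it, one designated voter has the strictly largest average competence, so every other voter delegates to it. Because the graph is complete and that voter's average is maximal, the average-based mechanism delegates every voter other than it directly to it; this holds whether the mechanism picks uniformly among all higher-average neighbours or transitively. It remains to check that the delegation graph is acyclic. This is automatic, since every edge strictly increases $\bar p$, and here every edge in fact points at the single sink.

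\textbf{Construction.} Fix a constant $c\in(1/2,1)$ (for instance $c=0.6$). Let voters $v_1,\dots,v_{n-1}$ have $p_i^0=p_i^1=c$, so $\bar p_i=c$. Let the special voter $v_n$ have $p_n^0=0.4$ and $p_n^1=1$. Then $\bar p_n=0.7>c$, while $v_n$ has no neighbour of strictly larger average and therefore votes directly. The delegation graph is a star into $v_n$, which becomes the unique sink with weight $n$. Conditional on $s=b$, the weighted vote is correct exactly when $X_n^b=1$. Hence $P_M^0=0.4$, $P_M^1=1$, and $P_M=\tfrac12(0.4+1)=0.7$, as the statement requires.

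\textbf{Direct voting.} Under $D$, conditional on either signal value, the number of correct votes is at least $\sum_{i<n}X_i^b$, a sum of $n-1$ independent Bernoulli$(c)$ variables with $c>1/2$. By Hoeffding's inequality,
$$\mathbb P\Bigl(\sum_{i<n}X_i^b<\tfrac n2\Bigr)\le \exp\bigl(-2(n-1)(c-\tfrac12-\tfrac{1}{2(n-1)})^2\bigr)\to0.$$
This is valid for $n$ large enough that $c-\tfrac12-\tfrac1{2(n-1)}>0$, which is where "sufficiently large $n$" enters. So $P_D^b\to1$ for each $b$, and therefore $P_D\to1$.

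\textbf{Conclusion and main obstacle.} It follows that $\gain(M,G_n)=0.7-P_D(G_n)\to-0.3$. Taking $\varepsilon=0.2$ in the equivalent form of ex ante DNH, for every $n_0$ there is $n\ge n_0$ with $\gain(M,G_n)<-\varepsilon$. Since $G_n$ is a complete-graph instance, the infimum over complete graphs is at most $\gain(M,G_n)$, so the liminf condition fails. The only delicate point is ensuring that the ex ante success probability is \emph{exactly} $0.7$. This forces the averaging to be exact, with $\tfrac12(p_n^0+p_n^1)=0.7$, together with the strict inequality $\bar p_n>c$ so that every other voter delegates. Note that the choice $p_n^1=1$ is needed to get $P_M^1=1$, because $P_M^b=p_n^b$ exactly. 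Any pair $(p^0_n,p^1_n)$ with average $0.7$ and $c<0.7$ works, so the construction is robust.
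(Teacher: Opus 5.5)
Your proof is correct and follows the paper's approach: the paper uses a special voter $(1,0.4)$ and ordinary voters $(0.6,0.6)$, and you swap the two states. It has the same Hoeffding argument for direct voting and the same star-into-one-sink computation $P_M=\tfrac12(0.4+1)=0.7$. The one fix needed is the range of $c$, which should be $c\in(1/2,0.7)$ rather than $(1/2,1)$, since $\bar p_n=0.7>c$ is what makes every ordinary voter delegate — as you note yourself at the end.
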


\begin{proof}
Let $v^\ast$ be one distinguished voter with $(p_{v^\ast}^0,p_{v^\ast}^1)=(1,0.4)$, and let each of the remaining $n-1$ voters have competence vector $(0.6,0.6)$. The graph is complete. Under direct voting, conditional on $s=0$, all voters have competence at least $0.6$, and $v^\ast$ is always correct. Conditional on $s=1$, the $n-1$ ordinary voters have competence $0.6$, while $v^\ast$ has competence $0.4$. In either signal state, the expected number of correct direct votes exceeds $n/2$ by a quantity of order $n$. Hoeffding's inequality therefore implies
$$P_D^0(G_n)\to1,\qquad P_D^1(G_n)\to1,$$
and hence $P_D(G_n)\to1$.

The average competence of $v^\ast$ is $(1+0.4)/2=0.7$, whereas that of every ordinary voter is $0.6$. Thus all ordinary voters delegate to $v^\ast$, while $v^\ast$ votes directly. The final decision is exactly $v^\ast$'s vote, and
$$P_M(G_n)=\frac12\cdot1+\frac12\cdot0.4=0.7.$$
Therefore $\gain(M,G_n)\to-0.3$. \qed
\end{proof}

\begin{remark}
For this numerical instance, an adversary that always selects the adverse state obtains limiting delegated success probability $0.4$ and limiting loss $0.6$. The loss can be made arbitrarily close to $1$ by replacing $(1,0.4)$ with $(1,\eta)$ and the ordinary voters' competence with $1/2+\eta/4$, then letting $\eta\downarrow0$.
\end{remark}

\begin{proposition}[Beneficial delegation may go to lower-average voters]\label{prop:lower-average-beneficial}
Assume $\pi_0=\pi_1=1/2$. For every sufficiently large $m$, there is a complete-graph instance with $2m$ voters in which direct voting succeeds with limiting probability $1/2$, while a delegation rule that sends voters to strictly lower-average voters succeeds with limiting probability $1$.
\end{proposition}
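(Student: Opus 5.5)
We construct an explicit complete-graph instance with two groups of $m$ voters each and compute both success probabilities directly. Let group $A$ consist of $m$ voters with competence vector $(1,0)$, and group $B$ consist of $m$ voters with competence vector $(1,\tfrac12-\delta)$ for some small fixed $\delta\in(0,\tfrac12)$; this choice is refined below. The idea is that under direct voting each signal state leaves roughly half the electorate unreliable, so the weighted majority hovers at $n/2$. A delegation concentrating weight on voters who are perfect in both states would then succeed with probability $1$. Since every vector in $[0,1]^2$ is dominated by $(1,1)$, which has maximal average, we instead have to exploit the exact tie-breaking rule (ties go to the truth) together with delegation toward lower-average voters.

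The construction I commit to is two symmetric groups with reversed reliability. Group $A$ has vector $(1,0)$ with average $1/2$, and group $B$ has vector $(0,1-\epsilon)$ with average $(1-\epsilon)/2<1/2$. Under direct voting, in state $0$ group $A$ is correct and group $B$ is wrong, so exactly $m=n/2$ correct votes appear and the tie rule gives success. I therefore replace $A$ by $(1-\epsilon,0)$ as well, so that the two groups are mirror images $(1-\epsilon,0)$ and $(0,1-\epsilon)$ with equal averages $(1-\epsilon)/2$. In each state the correct count is $\mathrm{Bin}(m,1-\epsilon)$, which is below $m=n/2$ with probability tending to $1$, so direct voting fails. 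This is the wrong direction, so I tune the parameters until the state-$b$ correct count is $m\cdot(\text{perfect})$ plus a symmetric fluctuation term. Concretely, I take a third type, a pair of ``pivot'' voters with vector $(1,1)$ but average below that of others, which is impossible. So I will instead aim for success $1$ via a rule that delegates group $B$ onto a \emph{single} $A$-voter that is perfect in the state where $B$ is bad.

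The cleanest plan is to let every voter have one-sided competence, with $A$-voters at $(1,\tfrac12)$ (average $3/4$) and $B$-voters at $(\tfrac12,1)$ (average $3/4$). I then perturb one $B$-voter $v^\ast$ to $(1,1-\epsilon)$ and lower the $A$-voters slightly to $(1,\tfrac12-\eta)$. The weighted-majority computation in each state then reduces, via the central limit theorem, to a centred binomial around $n/2$, giving limiting direct success $1/2$. The delegation rule sends every other voter to a designated voter whose vector coordinatewise guarantees a decisive majority, and I choose that designated voter's average to be strictly smaller than that of the delegators by placing most of its mass in the dominant coordinate.

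The main obstacle is arranging simultaneously that direct voting is exactly centred, which yields the limit $1/2$ through Lindeberg or CLT symmetry plus the tie-breaking convention, and that the delegates have strictly lower average yet deliver a near-certain majority in both states. My first attempt is to split delegation state-robustly. Group $A$ delegates to a small set $S_A$ of voters perfect in state $0$, and group $B$ delegates to a set $S_B$ perfect in state $1$, with weights chosen so each set alone carries more than $n/2$ in its good state. The other half then only needs to be not entirely wrong, which is handled by Hoeffding on many independent low-weight sinks. I will verify the averages and limits case by case in each state.
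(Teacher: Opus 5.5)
Your proposal never settles on an instance that works, and each concrete candidate you write down fails.

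\begin{itemize}
\item The mirror-image groups $(1-\epsilon,0)$ and $(0,1-\epsilon)$ make direct voting fail in both states, as you note yourself.
\item The ``one-sided'' instance does not give a centred binomial. In state $0$ the $m$ voters of $A$ have competence $1$, so at least $m=n/2$ direct votes are correct and direct voting succeeds with probability exactly $1$. In state $1$ the $m-1$ unperturbed $B$-voters are always correct, so one further correct vote suffices; this happens with probability at least $1-(\tfrac12+\eta)^m\to1$. Hence $P_D\to1$ there, not $1/2$.
\item Sending every other voter to one designated voter cannot work either. That voter then holds more than half the weight and decides alone, so the rule's ex ante success equals its average competence. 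Driving this to $1$ forces the delegators' averages, and hence both of their competences, towards $1$. Then direct voting succeeds too.
\item The split plan asks for disjoint sink sets $S_A$, $S_B$ each carrying weight more than $n/2$. This is impossible because the weights sum to $n$; a voter shared by both sets would be perfect in both states, with average $1$. A variant with weight exactly $n/2$ per cluster, relying on the tie rule, can be made to work. However, you specify no vectors and defer the averages and the direct-voting limit to a later check, so as written nothing is proved.
\end{itemize}

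The missing idea has two parts.

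First, the limit $1/2$ in the statement is an ex ante probability under $\pi_0=\pi_1=1/2$. It need not come from a centred binomial in each state: it is enough that direct voting succeeds with probability tending to $1$ in one state and to $0$ in the other. Chasing exact centring actually works against you. With two equal homogeneous groups, centring in both states forces the two group averages to sum to $1$. So a delegate group that beats $1/2$ in both states necessarily has the \emph{higher} average.

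Second, your premise that success tending to $1$ needs delegates dominated only by $(1,1)$ is false. It suffices to spread the weight over many independent sinks of bounded weight whose competence exceeds $1/2$ by a constant in both states; Hoeffding's inequality then finishes. So the delegates need only be \emph{stable}, while the delegators can have a higher average by being volatile.

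This is the paper's construction. Take $m$ voters with $(0.6,0.6)$ (average $0.6$) and $m$ voters with $(0.3,1)$ (average $0.65$). Direct voting has expected correct count $0.9m<m$ in state $0$ and $1.6m>m$ in state $1$, so $P_D\to\tfrac12$. Now pair each $(0.3,1)$-voter with a distinct $(0.6,0.6)$-voter and let it delegate. This gives $m$ independent weight-$2$ sinks of competence $0.6$ in both states, whose majority is correct with probability tending to $1$.
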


\begin{proof}
Partition the voters into groups $A$ and $B$, each of size $m$. Voters in $A$ have competence vector $(0.6,0.6)$, while voters in $B$ have competence vector $(0.3,1)$. Their average competences are $0.6$ and $0.65$, respectively, so delegation from $B$ to $A$ is delegation to lower-average voters.

Under direct voting, if $s=1$, the expected number of correct votes exceeds the threshold $m$ by order $m$, so $P_D^1(G_m)\to1$. If $s=0$, the expected number is $0.6m+0.3m=0.9m<m$, so $P_D^0(G_m)\to0$. Hence $P_D(G_m)\to1/2$.

Pair each voter in $B$ with a distinct voter in $A$ and have each $B$-voter delegate to its paired $A$-voter. The $A$-voters are then independent sinks of weight $2$ and competence $0.6$ in both states. Their weighted majority is correct exactly when a majority of the $A$-sinks is correct, which occurs with probability tending to $1$ by Hoeffding's inequality. \qed
\end{proof}

\begin{remark}
This proposition is an existence result showing that average competence gives the wrong ordering. It is not a claim that Algorithm~\ref{alg:two_dimensions} necessarily selects this paired delegation pattern.
\end{remark}

The following two plots are exact finite-size calculations for the constructions in Propositions~\ref{prop:average-violates-dnh} and~\ref{prop:lower-average-beneficial} in~\Cref{sec:strawman}; no Monte Carlo sampling is used. They are included as a diagnostic illustration of the asymptotic arguments. 

\begin{figure}[htb!]
    \centering
    \includegraphics[width=.52\linewidth]{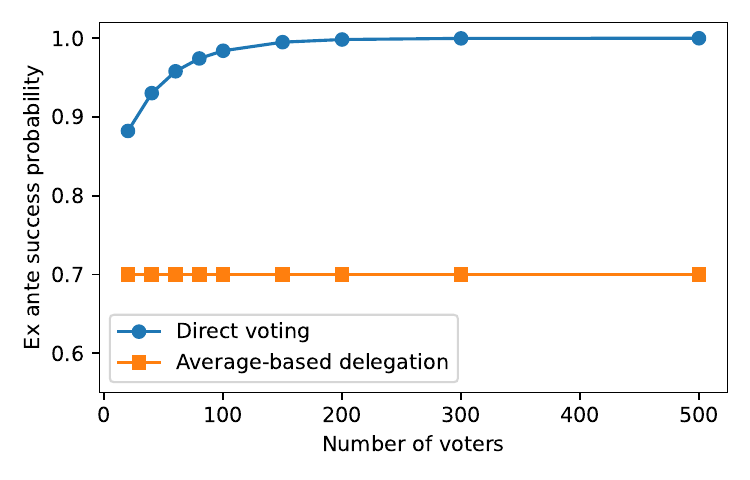}
    \caption{Exact finite-$n$ success probabilities for Proposition~\ref{prop:average-violates-dnh}. Direct voting tends rapidly to success probability one, while average-based delegation is fixed at $0.7$ because all ordinary voters delegate to the volatile voter $v^\ast$.}
    \label{fig:sanity-average-failure}
\end{figure}

\begin{figure}[htb!]
    \centering
    \includegraphics[width=.52\linewidth]{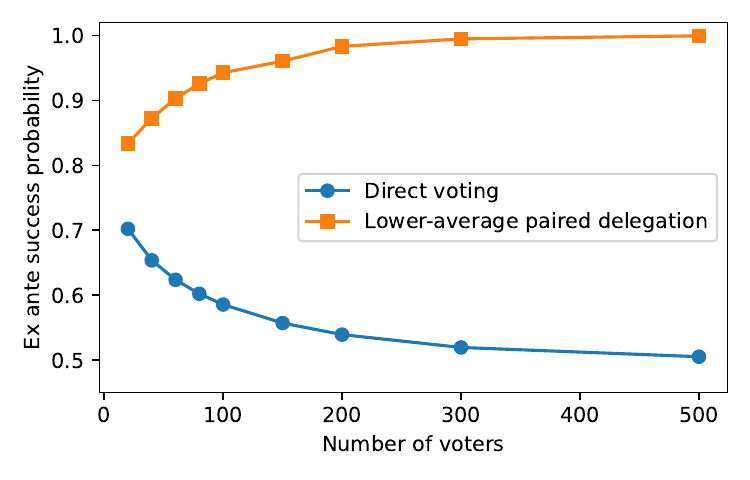}
    \caption{Exact finite-size success probabilities for Proposition~\ref{prop:lower-average-beneficial}. The lower-average paired-delegation rule tends to success probability one, whereas direct voting tends to $1/2$. As noted in the main text, this is an existence result about the failure of average ordering, not a claim that Algorithm~\ref{alg:two_dimensions} necessarily selects this delegation pattern.}
    \label{fig:sanity-lower-average}
\end{figure}

\subsection{Conservative intersection mechanism}
For $b\in\{0,1\}$ and a threshold $\alpha_b>0$, define the statewise favoured set
$$J^b(i)=\{v_j\in Ne(v_i):p_j^b-p_i^b\ge\alpha_b\}.$$
Thus ``favoured'' means better by at least the explicit margin $\alpha_b$ in signal state $b$. The conservative candidate set is
$$J^\cap(i)=J^0(i)\cap J^1(i).$$
In other words, a voter in $J^\cap(i)$ is better than $v_i$ by at least $\alpha_b$ in every state.

\begin{algorithm}[htb!]
  \begin{algorithmic}[1]
    \Require threshold function $j$
    \Ensure decisions to vote or delegate to another voter
    \For{each voter $v_i$}
    \State $d_i\gets |Ne(v_i)|$
    \State compute $J^\cap(i)=J^0(i)\cap J^1(i)$
    \If{$|J^\cap(i)|\ge j(d_i)$}
        \State $v_f\gets\Call{RandomChoice}{J^\cap(i)}$
        \State $\Call{Delegate}{v_f}$
    \Else
        \State $\Call{Vote}{v_i}$
    \EndIf
    \EndFor
  \end{algorithmic}
  \caption{Conservative intersection delegation}\label{alg:base}
\end{algorithm}

\begin{definition}[Subset-robust scalar guarantee]\label{def:subset-robust}
Fix a scalar competence vector $\mathbf p$, an approval margin $\alpha>0$, and a local threshold function $j$. A scalar guarantee is \emph{subset-robust} if it remains valid when, for every voter $v_i$, the mechanism may sample from any locally supplied set
$$S_i\subseteq \{v_j\in Ne(v_i):p_j-p_i\ge\alpha\}$$
with $|S_i|\ge j(|Ne(v_i)|)$ whenever it delegates, rather than being required to sample from the full approved set.
\end{definition}

\begin{restatable}[Statewise reduction for the intersection mechanism]{theorem}{intersection}\label{thm:intersection-reduction}
Fix a graph family and scalar parameter regime. Suppose a scalar DNH, PG, or SPG theorem with approval margin $\alpha_b$ and threshold function $j$ is subset-robust in the sense of~\Cref{def:subset-robust}. Then Algorithm~\ref{alg:base} inherits the same conclusion conditional on each signal state $b$, and therefore inherits the corresponding ex ante conclusion. For PG or SPG, the scalar witness can be embedded by setting $\mathbf p^0=\mathbf p^1$ and $\alpha_0=\alpha_1$.
\end{restatable}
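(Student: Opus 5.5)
The plan is to fix a signal state $b$ and observe that, because delegation is fixed before the signal and the mechanism never looks at $s$, the distribution of the realised delegation graph $G'=M(G)$ is the same whether we condition on $s=b$ or not. Conditional on $s=b$, the sinks vote independently with competences $p_i^b$. So the statewise quantity $P_M^b(G)$ is exactly the success probability of a scalar instance with competence vector $\mathbf p=\mathbf p^b$ under a scalar mechanism whose delegation graph has the same law as $M(G)$. The same holds for direct voting: $P_D^b(G)$ equals the scalar direct-voting success probability for $\mathbf p^b$. Hence $\gain_b(M,G)$ coincides with the scalar gain of the mechanism that, at each voter $v_i$, samples uniformly from $S_i:=J^\cap(i)$ whenever $|S_i|\ge j(d_i)$.

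Next I check that this mechanism is admissible under \Cref{def:subset-robust} with margin $\alpha=\alpha_b$. The inclusion $S_i=J^0(i)\cap J^1(i)\subseteq J^b(i)=\{v_j\in Ne(v_i):p_j^b-p_i^b\ge\alpha_b\}$ is immediate. By the rule of Algorithm~\ref{alg:base}, the size condition $|S_i|\ge j(|Ne(v_i)|)$ holds whenever $v_i$ delegates. The sets $S_i$ are determined by the local labels, so they are locally supplied.

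Acyclicity also follows from the statewise margin. Every edge raises $p^b$ by at least $\alpha_b>0$, so no cycle can form, and $G'$ is a legitimate delegation graph.

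By subset-robustness, the scalar DNH (respectively PG, SPG) conclusion applies to $\gain_b(M,\cdot)$ on the given family. For DNH this gives statewise DNH for each $b$. Ex ante DNH then follows from $\gain=\sum_b\pi_b\gain_b$ together with $\liminf(x_n+y_n)\ge\liminf x_n+\liminf y_n$.

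For PG and SPG the scalar theorem is existential, so we need witnesses in our class. Take a scalar witness instance with competence $\mathbf p$ and embed it by setting $\mathbf p^0=\mathbf p^1=\mathbf p$ and $\alpha_0=\alpha_1=\alpha$. Then $J^0(i)=J^1(i)$, so $J^\cap(i)$ is the full scalar approved set, and the mechanism coincides exactly with the scalar one. We then get $\gain_b=\gain_{\text{scalar}}\ge\gamma$ for both $b$, hence $\gain\ge\gamma$ for any prior $\pi$. The delegation counts are also identical, so the bound $\mathbb E[\#\text{delegators}]\ge\rho n$ transfers, giving $\Gamma_n(\rho)$ bounded below by the scalar profile.

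The main obstacle is the first step: making rigorous that conditioning on $s=b$ does not change the law of $G'$. This needs independence of the mechanism's internal randomness and labels from $s$, which follows from the timeline in the model, since labels are fixed functions of $(\mathbf p^0,\mathbf p^1)$ and the randomisation is internal. A second point to state carefully is that "the same parameter regime" must be read statewise. The DNH hypothesis has to hold for each scalar instance $(V,E,\mathbf p^b)$ with $G\in\mathcal G_n$, so I will take the graph family's statewise projections to lie in the scalar class.
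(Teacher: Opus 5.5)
Your proposal is correct and follows the same route as the paper's proof. You condition on $s=b$, use $J^\cap(i)\subseteq J^b(i)$ together with subset-robustness to apply the scalar theorem statewise, average over $b$ to get the ex ante conclusion, and embed the scalar witness via $\mathbf p^0=\mathbf p^1$, $\alpha_0=\alpha_1$ for PG/SPG. Your extra checks are sound details that the paper leaves implicit: that the law of the delegation graph does not depend on $s$, that the graph is acyclic because of the statewise margin, and that the delegation count transfers for SPG.
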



\begin{proof}

Fix a signal value $b\in\{0,1\}$. Conditional on $s=b$, the correctness indicators are independent Bernoulli variables with competence vector $p^b$. Moreover, every edge selected by the conservative intersection mechanism from $v_i$ to $v_j$ satisfies $p_j^b-p_i^b\ge\alpha_b$, because $v_j\in J^\cap(i)\subseteq J^b(i)$.

The only point requiring care is the sampling rule. Conditional on state $b$, the mechanism samples from $J^\cap(i)$, which is generally a proper subset of $J^b(i)$. Hence the scalar theorem applies directly only if it is subset-robust in the sense stated in \Cref{thm:intersection-reduction}: any sufficiently large subset of approved neighbours may be used. Under this hypothesis, the scalar do-no-harm theorem gives
$$\liminf_{n\to\infty}\inf_{G\in\mathcal G_n}\gain_b(M,G)\ge0.$$

Since this holds for both signal states, statewise do-no-harm follows. Ex ante do-no-harm follows by averaging:
$$\gain(M,G)=\pi_0\gain_0(M,G)+\pi_1\gain_1(M,G).$$

For positive gain, embed the scalar model by setting $p^0=p^1=p$ and $\alpha_0=\alpha_1=\alpha$. Then $J^\cap(i)=J(i)$, so the conservative mechanism coincides with the scalar mechanism on the embedded instance. \qed

\end{proof}

\smallskip\noindent{\bf When is the intersection nonempty?}
The mechanism is enabled when relative expertise is stable across signal states. For example, if a technical expert remains at least $\alpha_b$ more reliable than a neighbour whether a later report is accurate or misleading, that expert lies in the intersection. By contrast, when statewise rankings reverse, $J^0(i)\cap J^1(i)$ may be empty even though useful trade-offs exist. A neighbour may be much better in state $0$ and only slightly worse in state $1$; the next mechanism is designed to use such neighbours.

\subsection{Confounded-set mechanism and the neutral middle}
For $b\in\{0,1\}$, let $\alpha_b>0$ be the favoured margin and $\beta_b\ge0$ the tolerated loss. Define
$$J^b(i)=\{v_j\in Ne(v_i):p_j^b-p_i^b\ge\alpha_b\},\qquad
K^b(i)=\{v_j\in Ne(v_i):p_i^b-p_j^b>\beta_b\}.$$
Thus $J^b(i)$ contains neighbours better by at least $\alpha_b$, whereas $K^b(i)$ contains neighbours worse by more than $\beta_b$. The remaining neighbours are neutral in state $b$. The phrase ``excluded middle'' refers to the erroneous exclusion of this neutral region: $v_j\notin J^b(i)$ does not imply $v_j\in K^b(i)$.

Because $J^b(i)\subseteq Ne(v_i)$, the cross-state acceptable sets simplify to
$$A^0(i)=J^0(i)\setminus K^1(i),\qquad A^1(i)=J^1(i)\setminus K^0(i).$$
A voter in $A^0(i)$ is better by at least $\alpha_0$ in state $0$ and worse by at most $\beta_1$ in state $1$; the definition of $A^1(i)$ is symmetric. Figure~\ref{fig:placeholder} depicts these regions.

\begin{figure}
    \centering
    \includegraphics[width=0.4\textwidth]{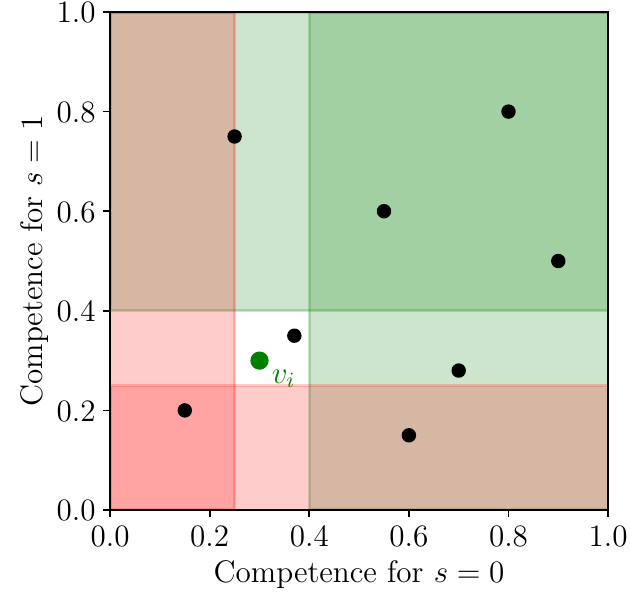}
    \caption{Geometry of favoured, confounded, and neutral sets for a fixed voter $v_i$. The horizontal coordinate is competence when $s=0$, and the vertical coordinate is competence when $s=1$. The margins $\alpha_b$ define the favoured regions, whereas $\beta_b$ define the confounded regions. The cross-state acceptable sets $A^0(i)$ and $A^1(i)$ admit neighbours that are strongly better in one state and not worse by more than the tolerated amount in the other.}
    \label{fig:placeholder}
\end{figure}

\begin{algorithm}[htb!]
  \begin{algorithmic}[1]
    \Require local labels $J^0(i),J^1(i),K^0(i),K^1(i)$; local threshold functions $c,c'$; mixing parameter $\lambda\in[0,1]$
    \Ensure decisions to vote or delegate to another voter
    \For{each voter $v_i$}
    \State $d_i\gets |Ne(v_i)|$
    \State $A^0(i)\gets J^0(i)\setminus K^1(i)$
    \State $A^1(i)\gets J^1(i)\setminus K^0(i)$
    \If{$|J^0(i)\cap J^1(i)|\ge c(d_i)$}
        \State $v_f\gets\Call{RandomChoice}{J^0(i)\cap J^1(i)}$
        \State $\Call{Delegate}{v_f}$
    \ElsIf{$|A^0(i)|\ge c'(d_i)$ and $|A^1(i)|\ge c'(d_i)$}
        \State draw $B_i\in\{0,1\}$ with $\mathbb P(B_i=0)=\lambda$
        \State $v_f\gets\Call{RandomChoice}{A^{B_i}(i)}$
        \State $\Call{Delegate}{v_f}$
    \Else
        \State $\Call{Vote}{v_i}$
    \EndIf
    \EndFor
  \end{algorithmic}
  \caption{Delegation with favoured and confounded sets}\label{alg:two_dimensions}
\end{algorithm}

\begin{lemma}[Potential-function acyclicity]\label{lem:potential-acyclicity}
Assume $\alpha_0\alpha_1>\beta_0\beta_1$. Then there exist constants $r_0,r_1>0$ such that
$$\Phi_i=r_0p_i^0+r_1p_i^1$$
strictly increases along every delegation edge selected by \Cref{alg:two_dimensions}. Consequently, every realised delegation graph is acyclic.
\end{lemma}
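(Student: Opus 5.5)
The plan is to enumerate the three kinds of edges that Algorithm~\ref{alg:two_dimensions} can select and to find positive weights $r_0,r_1$ that make the linear functional $\Phi$ strictly positive on every resulting difference vector. Write $\Delta^b=p_j^b-p_i^b$ for a selected edge $v_i\to v_j$.
\begin{itemize}
\item \emph{Intersection branch}, $v_j\in J^0(i)\cap J^1(i)$: here $\Delta^0\ge\alpha_0>0$ and $\Delta^1\ge\alpha_1>0$, so any $r_0,r_1>0$ give $\Phi_j-\Phi_i>0$.
\item \emph{Set $A^0(i)$}: here $\Delta^0\ge\alpha_0$ and $\Delta^1\ge-\beta_1$, because $v_j\notin K^1(i)$ means $p_i^1-p_j^1\le\beta_1$.
\item \emph{Set $A^1(i)$}: symmetrically, $\Delta^1\ge\alpha_1$ and $\Delta^0\ge-\beta_0$.
\end{itemize}
Since $r_0,r_1>0$, in the last two cases the worst case is attained at the lower bounds. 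It therefore suffices to have
$$r_0\alpha_0-r_1\beta_1>0\quad\text{and}\quad r_1\alpha_1-r_0\beta_0>0.$$

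The key step is to choose the ratio $t=r_1/r_0$ satisfying both inequalities simultaneously. The conditions are $t<\alpha_0/\beta_1$ and $t>\beta_0/\alpha_1$, where a bound with zero denominator is read as vacuous. An admissible $t>0$ exists exactly when $\beta_0/\alpha_1<\alpha_0/\beta_1$, i.e.\ when $\beta_0\beta_1<\alpha_0\alpha_1$, which is the hypothesis.

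To give explicit constants I would split into cases. If $\beta_0=\beta_1=0$, take $r_0=r_1=1$. If exactly one $\beta_b$ vanishes, take the ratio as any value on the nonvacuous side (for example $t=2\beta_0/\alpha_1$, or $t=1$ if that is zero). If both are positive, take the geometric mean $t=\sqrt{\alpha_0/\beta_1\cdot\beta_0/\alpha_1}$, which lies strictly between the two bounds. Then set $r_0=1$ and $r_1=t$.

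The only subtlety is the degenerate zero-$\beta$ case and the need for a single pair $(r_0,r_1)$ that is uniform over all voters. This is fine because $\alpha_b$ and $\beta_b$ are global constants, so the weights do not depend on $i$ or $j$.

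For acyclicity, suppose a realised delegation graph had a cycle $v_{i_1}\to\cdots\to v_{i_k}\to v_{i_1}$. Summing the strict increases of $\Phi$ along the cycle gives $\Phi_{i_1}>\Phi_{i_1}$, a contradiction. Since each voter delegates at most once, the out-degree is at most one, so the realised graph is a valid delegation forest.
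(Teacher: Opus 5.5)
Your proof is correct and is essentially the paper's argument: the same three-branch case analysis, the same requirement $r_0\alpha_0>r_1\beta_1$, $r_1\alpha_1>r_0\beta_0$ reduced to the ratio interval $\beta_0/\alpha_1<r_1/r_0<\alpha_0/\beta_1$ (nonempty exactly when $\alpha_0\alpha_1>\beta_0\beta_1$), and the same summation around a cycle to get a contradiction. There is one slip in your optional explicit constants. When $\beta_0=0<\beta_1$, the fallback $t=1$ requires $\beta_1<\alpha_0$, which the hypothesis does not guarantee: once $\beta_0=0$ the hypothesis holds for every $\beta_1$, so take e.g.\ $t=\alpha_0/(2\beta_1)$ instead. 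Your interval-existence argument already covers this case, so the proof itself is unaffected.
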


\begin{proof}
Choose $r_0,r_1>0$ such that
$$r_0\alpha_0>r_1\beta_1,
\qquad
r_1\alpha_1>r_0\beta_0.$$
Equivalently,
$$\frac{\beta_0}{\alpha_1}<\frac{r_1}{r_0}<\frac{\alpha_0}{\beta_1},$$
with the evident interpretation when a $\beta_b$ is zero. Such a ratio exists exactly when $\alpha_0\alpha_1>\beta_0\beta_1$.

If $v_j\in J^0(i)\cap J^1(i)$, both coordinates increase by at least $\alpha_0$ and $\alpha_1$, so $\Phi_j>\Phi_i$. If $v_j\in A^0(i)$, then $p_j^0-p_i^0\ge\alpha_0$ and $p_j^1-p_i^1\ge-\beta_1$, whence
$$\Phi_j-\Phi_i\ge r_0\alpha_0-r_1\beta_1>0.$$

The case $v_j\in A^1(i)$ is symmetric. A directed cycle would force a strict increase of $\Phi$ around a closed path, which is impossible. \qed
\end{proof}

\begin{proposition}[Immediate expected statewise margin]\label{prop:expected-margin-lambda}
Suppose \Cref{alg:two_dimensions} reaches its second branch for voter $v_i$, and let $v_f$ be the selected immediate delegate. Then
$$\mathbb E[p_f^0-p_i^0]\ge \lambda\alpha_0-(1-\lambda)\beta_0,
\qquad
\mathbb E[p_f^1-p_i^1]\ge (1-\lambda)\alpha_1-\lambda\beta_1.$$
The expected immediate margin is positive in both states whenever
$$\frac{\beta_0}{\alpha_0+\beta_0}<\lambda<\frac{\alpha_1}{\alpha_1+\beta_1},$$
and such a $\lambda$ exists if and only if $\alpha_0\alpha_1>\beta_0\beta_1$.
\end{proposition}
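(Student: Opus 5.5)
Conditioning on the auxiliary bit $B_i$ drawn in the second branch of \Cref{alg:two_dimensions} gives the bounds directly. The inequalities below hold for every realisation of the choice within $A^{B_i}(i)$, so the uniform sampling plays no role beyond making $v_f$ well defined. The second branch requires $|A^0(i)|\ge c'(d_i)$ and $|A^1(i)|\ge c'(d_i)$. We need both sets to be nonempty so that $v_f$ exists in either case, which holds when $c'(d_i)\ge1$; otherwise we condition on the nonempty case.

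\emph{Case $B_i=0$, probability $\lambda$.} Here $v_f\in A^0(i)=J^0(i)\setminus K^1(i)$. Hence $p_f^0-p_i^0\ge\alpha_0$. Also $v_f\notin K^1(i)$, and since $v_f\in Ne(v_i)$ this means $p_i^1-p_f^1\le\beta_1$, so $p_f^1-p_i^1\ge-\beta_1$.

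\emph{Case $B_i=1$, probability $1-\lambda$.} Symmetrically, $p_f^1-p_i^1\ge\alpha_1$ and $p_f^0-p_i^0\ge-\beta_0$.

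By the law of total expectation,
$$\mathbb E[p_f^0-p_i^0]\ge\lambda\alpha_0-(1-\lambda)\beta_0,\qquad \mathbb E[p_f^1-p_i^1]\ge(1-\lambda)\alpha_1-\lambda\beta_1.$$

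The first lower bound is positive iff $\lambda(\alpha_0+\beta_0)>\beta_0$, that is, $\lambda>\beta_0/(\alpha_0+\beta_0)$, using $\alpha_0>0$. The second is positive iff $\alpha_1>\lambda(\alpha_1+\beta_1)$, that is, $\lambda<\alpha_1/(\alpha_1+\beta_1)$. Both endpoints lie in $[0,1)$ and $(0,1]$ respectively. So a $\lambda\in[0,1]$ with both inequalities exists iff the open interval is nonempty:
$$\frac{\beta_0}{\alpha_0+\beta_0}<\frac{\alpha_1}{\alpha_1+\beta_1}.$$
Cross-multiplying by the positive denominators gives $\beta_0\alpha_1+\beta_0\beta_1<\alpha_1\alpha_0+\alpha_1\beta_0$, which is equivalent to $\beta_0\beta_1<\alpha_0\alpha_1$. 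This is the same condition as in \Cref{lem:potential-acyclicity}.

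The only delicate point is the reading of ``if and only if''. It refers to the existence of $\lambda$ satisfying the stated sufficient window, not to positivity of the true expectations, which may exceed the bounds. I would state this interpretation explicitly. I would also note the edge case $\beta_0=\beta_1=0$, where the window is $(0,1)$ and the condition $\alpha_0\alpha_1>0$ holds automatically.
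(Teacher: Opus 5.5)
Your proof is correct and follows the paper's argument: condition on $B_i$, read off $p_f^{B_i}-p_i^{B_i}\ge\alpha_{B_i}$ and $p_f^{1-B_i}-p_i^{1-B_i}\ge-\beta_{1-B_i}$ from $v_f\in J^{B_i}(i)\setminus K^{1-B_i}(i)$, average over $B_i$, and reduce nonemptiness of the $\lambda$-window to $\alpha_0\alpha_1>\beta_0\beta_1$ by cross-multiplying. Your remark on nonemptiness of $A^0(i)$ and $A^1(i)$ (relevant only when $c'(d_i)=0$) is extra care the paper leaves implicit. Phrase it as assuming both sets are nonempty, not as conditioning on $A^{B_i}(i)\neq\emptyset$, since the latter would change the law of $B_i$ and break the averaging.
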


\begin{proof}
If $B_i=0$, then $v_f\in A^0(i)$. Hence $p_f^0-p_i^0\ge\alpha_0$, and $v_f\notin K^1(i)$ gives $p_f^1-p_i^1\ge-\beta_1$. If $B_i=1$, then $p_f^1-p_i^1\ge\alpha_1$ and $p_f^0-p_i^0\ge-\beta_0$. Averaging over $B_i$ gives the displayed bounds. The two bounds are positive exactly when $\lambda(\alpha_0+\beta_0)>\beta_0$ and $\lambda(\alpha_1+\beta_1)<\alpha_1$, which gives the interval. The interval is nonempty exactly when $\alpha_0\alpha_1>\beta_0\beta_1$.  \qed
\end{proof}

\begin{corollary}[A concrete bounded-weight case for \Cref{alg:two_dimensions}]\label{cor:alg2-bounded-weight}
Assume the hypotheses of Lemma~\ref{lem:potential-acyclicity}, and fix any admissible choice of $r_0,r_1$. Let
$$\eta=\min\{r_0\alpha_0+r_1\alpha_1,\; r_0\alpha_0-r_1\beta_1,\; r_1\alpha_1-r_0\beta_0\}>0.$$

If the awareness graph has maximum in-degree at most $\Delta_{\mathrm{in}}$, then every realised delegation path under \Cref{alg:two_dimensions} has length at most
$$L_\Phi=\left\lceil \frac{r_0+r_1}{\eta}\right\rceil,$$
and every terminal sink has weight at most
$$W_\Phi=1+\Delta_{\mathrm{in}}+\cdots+\Delta_{\mathrm{in}}^{L_\Phi}.$$
Consequently every realised delegation graph $H$ satisfies $W_2(H)\le W_\Phi n$.
\end{corollary}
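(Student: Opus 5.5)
The plan is a potential-increment argument: every realised edge raises $\Phi$ by at least $\eta$, while $\Phi$ lives in an interval of length $r_0+r_1$, so paths have bounded length, and bounded in-degree then bounds the weight of each sink.

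\emph{Step 1 (uniform increment).} I will strengthen Lemma~\ref{lem:potential-acyclicity} to a quantitative bound: every edge $v_i\to v_j$ selected by \Cref{alg:two_dimensions} satisfies $\Phi_j-\Phi_i\ge\eta$. There are three cases, matching the branches of the algorithm.
\begin{itemize}
\item If $v_j\in J^0(i)\cap J^1(i)$, then $\Phi_j-\Phi_i\ge r_0\alpha_0+r_1\alpha_1$.
\item If $v_j\in A^0(i)$, the proof of Lemma~\ref{lem:potential-acyclicity} gives $\Phi_j-\Phi_i\ge r_0\alpha_0-r_1\beta_1$.
\item If $v_j\in A^1(i)$, the symmetric argument gives $\Phi_j-\Phi_i\ge r_1\alpha_1-r_0\beta_0$.
\end{itemize}
In each case the increment is at least the minimum, namely $\eta$. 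The admissible choice of $r_0,r_1$ makes the last two quantities positive, and the first is trivially positive, so $\eta>0$.

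\emph{Step 2 (path length).} Since $p_i^b\in[0,1]$, we have $\Phi_i\in[0,r_0+r_1]$. A realised path with $L$ edges increases $\Phi$ by at least $L\eta$, so $L\eta\le r_0+r_1$ and hence $L\le L_\Phi$.

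\emph{Step 3 (sink weight).} Fix a sink $t$. Every voter whose chain ends at $t$ is reached by a reverse path from $t$ in the realised graph $H$, of length at most $L_\Phi$. Each edge of $H$ is an edge of the awareness graph, so every vertex has at most $\Delta_{\mathrm{in}}$ in-neighbours in $H$. Counting vertices at reverse distance $k=0,\dots,L_\Phi$ gives at most $\Delta_{\mathrm{in}}^k$ at each distance. Summing over $k$ gives $w_t\le W_\Phi$.

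\emph{Step 4 (the $W_2$ bound).} The last claim depends on the definition of $W_2(H)$, which does not appear in the excerpt shown. I will assume $W_2(H)=\sum_{t}w_t(H)^2$, the quantity that governs the variance of the weighted vote. Then
$$W_2(H)\le \max_t w_t\cdot\sum_t w_t\le W_\Phi n.$$

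The main obstacle is only this bookkeeping in Step 4: the exact definition of $W_2$ needs to be checked. Everything else follows directly from the increment bound in Step 1 and the boundedness of competences.
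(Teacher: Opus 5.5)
Your proof is correct and follows the paper's argument step for step. The steps are:
\begin{itemize}
\item a uniform per-edge increment of $\Phi$ by at least $\eta$, from the three branch cases;
\item the range $\Phi\in[0,r_0+r_1]$, which bounds path length;
\item reverse counting with in-degree at most $\Delta_{\mathrm{in}}$, which bounds sink weight;
\item $W_2(H)\le \max_t w_t\sum_t w_t\le W_\Phi n$.
\end{itemize}
Your assumed definition $W_2(H)=\sum_{v_t\in T(H)} w_t^2$ is exactly the one the paper uses in its weighted Hoeffding lemma, so Step 4 needs no further checking.
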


\begin{proof}
Along every realised delegation edge, the potential $\Phi_i=r_0p_i^0+r_1p_i^1$ increases by at least $\eta$. Since $0\le p_i^0,p_i^1\le1$, the potential lies in $[0,r_0+r_1]$, so no directed delegation path can have more than $L_\Phi$ edges. With maximum in-degree $\Delta_{\mathrm{in}}$, at most $1+\Delta_{\mathrm{in}}+\cdots+\Delta_{\mathrm{in}}^{L_\Phi}$ vertices can reach any fixed terminal sink. The displayed $W_2$ bound follows from $w_i\le W_\Phi$ and $\sum_i w_i=n$. \qed
\end{proof}

\begin{lemma}[Terminal expected margin for \Cref{alg:two_dimensions}]\label{lem:terminal-expected-margin}
Under the hypotheses of \Cref{thm:disaproval}, let $\tau(i)$ be the terminal sink reached by voter $v_i$. Then for $b\in\{0,1\}$,
$$\mathbb E[p_{\tau(i)}^b-p_i^b]\ge \underline\delta_b\,\mathbb P(v_i\text{ delegates}),$$
with $\underline\delta_0$ and $\underline\delta_1$ as in \Cref{thm:disaproval}.
\end{lemma}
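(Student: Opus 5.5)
We argue by strong induction along the potential order of \Cref{lem:potential-acyclicity}, processing voters in decreasing order of $\Phi_i=r_0p_i^0+r_1p_i^1$. By that lemma every realised delegation edge goes from lower to higher $\Phi$. Hence when we treat $v_i$, the claim already holds for every possible immediate delegate $v_f$. For a sink the claim is trivial, since $\tau(i)=i$ and the probability of delegating is zero.

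Because \Cref{thm:disaproval} is stated later, we take its hypotheses to supply constants $\underline\delta_b>0$ with two properties. First, $\underline\delta_b$ is at most the immediate expected margin in state $b$ from each delegating branch. For the intersection branch this margin is $\alpha_b$. For the mixed branch it is $\lambda\alpha_0-(1-\lambda)\beta_0$ when $b=0$ and $(1-\lambda)\alpha_1-\lambda\beta_1$ when $b=1$, by \Cref{prop:expected-margin-lambda}. Second, the hypotheses ensure the continuation from a delegate does not lose in state $b$.

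For $v_i$ we decompose
$$\mathbb E[p^b_{\tau(i)}-p^b_i]=\mathbb E[p^b_f-p^b_i]+\mathbb E[p^b_{\tau(f)}-p^b_f],$$
with the convention $f=i$ when $v_i$ votes. The mechanism's randomisations at different voters are independent. So, conditional on the choice $v_f$, the second term equals the unconditional terminal margin of $v_f$. By the induction hypothesis this is at least $\underline\delta_b\,\mathbb P(v_f\text{ delegates})\ge0$. The first term is $0$ if $v_i$ votes directly. If $v_i$ delegates it is at least $\underline\delta_b$, by the branch analysis above. Note that whether $v_i$ delegates is deterministic given the labels, so $\mathbb P(v_i\text{ delegates})\in\{0,1\}$. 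Adding the two terms gives the bound.

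The main obstacle is the second term. A per-step expected bound alone does not control the tail of the chain, because the expectation is only over $B_i$ and the uniform choice. The realised delegate in $A^{B_i}(i)$ may be worse by up to $\beta_b$ in state $b$, and later steps could in principle compound these losses. The induction resolves this: it only needs the terminal margin of each \emph{possible} delegate to be nonnegative, and that is exactly the inductive claim, since each $\underline\delta_b\ge0$. Independence of randomness across voters is what lets us replace the conditional terminal margin of $v_f$ by its unconditional one.

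The induction is well founded because $V$ is finite and $\Phi$ strictly increases along edges. If the realised graph were cyclic, this ordering would not exist, so \Cref{lem:potential-acyclicity} is used essentially here.
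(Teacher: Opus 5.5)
Your proof is correct and follows essentially the same route as the paper: induction in decreasing potential $\Phi$, using the $\alpha_b$ margin of the intersection branch or \Cref{prop:expected-margin-lambda} for the first step, and the inductive nonnegativity of the delegate's own terminal margin for the rest of the chain. You also make explicit points the paper leaves implicit: independence of the per-voter randomisation, so the delegate's conditional terminal margin equals its unconditional one; that the delegation decision is deterministic given the labels, so $\mathbb P(v_i\text{ delegates})\in\{0,1\}$; and that $\underline\delta_b\le\alpha_b$, which holds because $\lambda\le1$ and $\beta_b\ge0$.
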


\begin{proof}
Order the voters by decreasing potential $\Phi$. By \Cref{lem:potential-acyclicity}, every possible delegation edge points to a strictly larger potential, so this order is topological for every realised delegation graph. For each state $b$, let
$$Q_i^b=\mathbb E[p_{\tau(i)}^b]$$

where the expectation is over the mechanism's randomisation downstream of $v_i$.

We prove by induction in decreasing potential that $Q_i^b\ge p_i^b$ for every voter, and that the displayed stronger bound holds when $v_i$ delegates. If $v_i$ votes directly, then $Q_i^b=p_i^b$. If $v_i$ delegates through the intersection branch, then each immediate delegate $v_f$ satisfies $p_f^b-p_i^b\ge\alpha_b$, and the induction hypothesis gives $Q_f^b\ge p_f^b$. Hence $Q_i^b\ge p_i^b+\alpha_b$. If $v_i$ delegates through the second branch, \Cref{prop:expected-margin-lambda} gives the corresponding immediate expected margin, and the induction hypothesis again gives $Q_f^b\ge p_f^b$. Thus $Q_i^b\ge p_i^b+\underline\delta_b$. Combining this with the direct-voting case proves the claim. \qed
\end{proof}

\begin{restatable}[Structural guarantees and conditional majority-correctness reduction for Algorithm~\ref{alg:two_dimensions}]{theorem}{disapproval}\label{thm:disaproval}
Assume $\alpha_0\alpha_1>\beta_0\beta_1$ and choose $\lambda$ such that
$$\frac{\beta_0}{\alpha_0+\beta_0}<\lambda<\frac{\alpha_1}{\alpha_1+\beta_1}.$$
Then every realised delegation graph is acyclic. If $\tau(i)$ is the terminal sink reached by voter $v_i$, then for each state $b$,
$$\mathbb E[p^b_{\tau(i)}-p_i^b]\ge \underline\delta_b\,\mathbb P(v_i\text{ delegates}),$$
where
$$\underline\delta_0=\lambda\alpha_0-(1-\lambda)\beta_0,\qquad
\underline\delta_1=(1-\lambda)\alpha_1-\lambda\beta_1.$$
Consequently,
$$\mathbb E[\mu_{M(G)}^b]\ge \mu_D^b+\underline\delta_b\,\mathbb E[d(M(G))],$$
where $d(M(G))$ is the number of delegating voters. If, in addition, the induced terminal weights and the realised terminal mean satisfy the same weight-dispersion and mechanism-randomisation concentration hypotheses used by a scalar DNH, PG, or SPG theorem, then Algorithm~\ref{alg:two_dimensions} inherits the corresponding statewise and ex ante conclusion.
\end{restatable}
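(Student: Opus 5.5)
The theorem bundles four claims, and I will prove them in order by assembling results already stated. First, acyclicity. The hypothesis $\alpha_0\alpha_1>\beta_0\beta_1$ is exactly the hypothesis of \Cref{lem:potential-acyclicity}. That lemma supplies $r_0,r_1>0$ such that $\Phi$ strictly increases along every selected edge, in both the intersection branch and the mixed branch. Hence every realised delegation graph is acyclic, and the "evident interpretation" there covers the case where some $\beta_b$ is zero.

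Second, the terminal margin. This is \Cref{lem:terminal-expected-margin}, and I will redo its induction carefully since that lemma defers to the present theorem for its hypotheses. Order voters by decreasing $\Phi$ and set $Q_i^b=\mathbb E[p_{\tau(i)}^b]$. Mechanism randomness is independent across voters, so $Q_i^b$ conditioned on $v_i$ choosing $v_f$ equals $Q_f^b$. We can then write
$$Q_i^b=\mathbb P(\text{direct})\,p_i^b+\sum_{\text{branches}}\mathbb P(\text{branch})\,\mathbb E[Q_f^b\mid\text{branch}].$$
Note that which branch fires is in fact deterministic given the labels. In the intersection branch the immediate gain is at least $\alpha_b$. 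In the mixed branch, \Cref{prop:expected-margin-lambda} gives at least $\underline\delta_b$, and the chosen $\lambda$ lies in its interval, so $\underline\delta_b>0$. By induction $Q_f^b\ge p_f^b$, so the terminal gain is at least $\min(\alpha_b,\underline\delta_b)\,\mathbb 1[\text{delegates}]$.

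One small point needs care here. The statement uses $\underline\delta_b$ alone, so I must check that $\alpha_b\ge\underline\delta_b$. For $b=0$ this holds because $\lambda\alpha_0-(1-\lambda)\beta_0\le\alpha_0$, and the case $b=1$ is symmetric. The claimed bound $\underline\delta_b\,\mathbb P(v_i\text{ delegates})$ follows.

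Third, the mean bound. Define the terminal mean as $\mu^b_{M(G)}=\sum_{\text{sinks}}w_ip_i^b=\sum_i p^b_{\tau(i)}$, and note that $\mu_D^b=\sum_ip_i^b$. Summing the per-voter bound over $i$ and using linearity gives
$$\mathbb E[\mu^b_{M(G)}]-\mu^b_D\ge\underline\delta_b\sum_i\mathbb P(v_i\text{ delegates})=\underline\delta_b\,\mathbb E[d(M(G))].$$

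Fourth, the conditional reduction. Fix $b$ and condition on $s=b$. The sink votes are then independent Bernoulli variables with parameters $p^b$, exactly as in the scalar model. The scalar DNH, PG, or SPG proofs use only three ingredients: an expected-mean improvement proportional to the delegation count, weight dispersion (e.g.\ a $W_2$ bound, which \Cref{cor:alg2-bounded-weight} provides on bounded in-degree graphs), and concentration of the realised mean over the mechanism's randomness. The first ingredient was established in the third step, and the other two are assumed by hypothesis. Applying the scalar theorem statewise therefore gives statewise conclusions. Averaging $\gain=\pi_0\gain_0+\pi_1\gain_1$ then yields the ex ante conclusion; for PG or SPG, it suffices to exhibit a witness, for example by embedding one with $\mathbf p^0=\mathbf p^1$ as in \Cref{thm:intersection-reduction}.

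I expect this last step to be the main obstacle, because "inherits" requires that the scalar argument depends on the delegation rule only through these three quantities. I would state that interface explicitly as the precise hypothesis of the theorem, rather than try to reprove the scalar theorems.
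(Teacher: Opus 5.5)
Your proposal is correct and follows essentially the same route as the paper. You obtain acyclicity from \Cref{lem:potential-acyclicity}, prove the terminal margin by induction in decreasing potential using \Cref{prop:expected-margin-lambda}, get the mean bound by summing $\sum_i p^b_{\tau(i)}$, and treat the final claim as a statewise conditional reduction followed by averaging over $b$. Your explicit checks that $\alpha_b\ge\underline\delta_b$ (needed to absorb the intersection branch) and that downstream randomness is independent of $v_i$'s choice fill in small steps that \Cref{lem:terminal-expected-margin} leaves implicit.
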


\begin{proof}
Acyclicity is proved in~\Cref{lem:potential-acyclicity}; the immediate and terminal expected-margin statements are proved in~\Cref{prop:expected-margin-lambda,lem:terminal-expected-margin}. Summing the terminal inequality over voters gives the displayed mean bound, since
$$\mu_{M(G)}^b=\sum_{v_i\in T(M(G))}w_i(M(G))p_i^b=\sum_i p_{\tau(i)}^b.$$
Mean improvement alone does not determine majority correctness. Conditional on a realised delegation graph, weighted Hoeffding depends on $W_2(H)=\sum_iw_i^2$, while the mechanism's randomisation must also keep the realised terminal mean close to its expectation. Under these two additional hypotheses, the scalar concentration argument applies separately in each signal state; averaging over $b$ gives the ex ante conclusion.
\end{proof}

\begin{remark}
The terminal-weight condition is concrete rather than vacuous. If the awareness graph has maximum in-degree at most $\Delta_{\rm in}$, the potential in~\Cref{lem:potential-acyclicity} increases by at least a constant $\eta>0$ on each delegation edge and lies in a bounded interval. Hence every path has length at most $\lceil(r_0+r_1)/\eta\rceil$, and every sink has bounded weight; see~\Cref{cor:alg2-bounded-weight}. Dense graph classes require a separate dispersion argument.
\end{remark}

\begin{remark}[Scope of the conditional reduction]
For Algorithm~\ref{alg:two_dimensions}, the preceding lemmas establish mean improvement and acyclicity. Corollary~\ref{cor:alg2-bounded-weight} verifies the terminal-weight component on bounded-in-degree graphs. A full DNH or SPG conclusion in a dense or minimum-degree regime additionally requires the mechanism-randomisation concentration argument of the corresponding scalar theorem; we do not restate an incomplete adaptation here.

\end{remark}

\section{Trading locality for communication on bounded-degree graphs}\label{sec:distributed}
The one-round mechanisms in~\Cref{sec:mechanisms} use only labels on a voter's immediate out-neighbours. On bounded-degree graphs, additional synchronous communication rounds can reveal short safe paths through neutral intermediaries, further widening the delegate pool. The purpose is not to follow an arbitrary path: every accepted path must carry a nonnegative certificate in each signal state.

Call a directed awareness edge $(v_i,v_j)$ \emph{admissible} when its head is not confounded for $v_i$ in either state,
$$v_j\notin K^0(i)\cup K^1(i).$$
For an admissible edge, define the score vector
$$e_{ij}=(e_{ij}^0,e_{ij}^1),\qquad
e_{ij}^b=\begin{cases}
1,&v_j\in J^b(i),\\
-1,&v_j\notin J^b(i).
\end{cases}$$
Thus $e_{ij}^b=1$ certifies an actual competence gain of at least $\alpha_b$ in state $b$, whereas $e_{ij}^b=-1$ certifies that the edge is neutral rather than confounded and hence incurs a loss of at most $\beta_b$. The certificate of a directed path is the coordinatewise sum of its edge scores.

Write $x\succeq y$ for coordinatewise dominance and $x\succ y$ when $x\succeq y$ and $x\ne y$. Algorithm~\ref{alg:more_communication} propagates only certificates that are coordinatewise nonnegative. It deliberately avoids scalarising incomparable certificate vectors. If two candidates become incomparable after one has been selected, the fixed local scan order determines which certificate is retained. This conservative convention may miss feasible paths, but it makes certificate propagation monotone and supports the cycle argument below.

\begin{algorithm}[htb!]
  \begin{algorithmic}[1]
    \Require local labels $J^0(i),J^1(i),K^0(i),K^1(i)$; thresholds $\alpha_b>\beta_b$; round bound $T$; fixed local neighbour order
    \Ensure decision to vote or delegate to a neighbour
    \For{each voter $v_i$ in parallel}
        \State $R_i\gets(0,0)$; $D_i\gets\bot$
    \EndFor
    \For{$t=1,\ldots,T$}
        \For{each voter $v_i$ in parallel}
            \State $R_i^{\rm new}\gets R_i$; $D_i^{\rm new}\gets D_i$
            \For{each admissible edge $(v_i,v_j)$ in the fixed local order}
                \State $C\gets R_j+e_{ij}$
                \If{$C\succeq(0,0)$ and $C\succ R_i^{\rm new}$}
                    \State $R_i^{\rm new}\gets C$; $D_i^{\rm new}\gets v_j$
                \EndIf
            \EndFor
        \EndFor
        \For{each voter $v_i$ in parallel}
            \State $R_i\gets R_i^{\rm new}$; $D_i\gets D_i^{\rm new}$
        \EndFor
    \EndFor
    \For{each voter $v_i$ in parallel}
        \If{$D_i=\bot$}
            \State $\Call{Vote}{v_i}$
        \Else
            \State $\Call{Delegate}{D_i}$
        \EndIf
    \EndFor
  \end{algorithmic}
  \caption{Certified-path delegation with $T$ synchronous communication rounds}\label{alg:more_communication}
\end{algorithm}


\begin{restatable}[Certified-path guarantees on bounded-in-degree graphs]{theorem}{communication}\label{thm:more_communication}
Assume $\alpha_b>\beta_b$ for $b\in\{0,1\}$ and set
$$T=1+\left\lceil2\min\left\{\frac1{\alpha_0-\beta_0},\frac1{\alpha_1-\beta_1}\right\}\right\rceil.$$
For every realisation of Algorithm~\ref{alg:more_communication}, the delegation graph is acyclic. If voter $v_i$ delegates and $\tau(i)$ is its terminal sink, then for both signal states
$$p_{\tau(i)}^b-p_i^b\ge\alpha_b-\beta_b.$$
If the directed awareness graph has maximum in-degree at most $\Delta_{\rm in}$, then every terminal sink has weight at most
$$W_T=1+\Delta_{\rm in}+\Delta_{\rm in}^2+\cdots+\Delta_{\rm in}^T,$$
and every realised delegation graph $H$ satisfies $W_2(H)\le W_Tn$.
\end{restatable}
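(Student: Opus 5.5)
The plan is to reduce everything to one invariant about the final pointers $D_i$ and certificates $R_i$, and then read off all three claims from it.

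\emph{Step 1 (monotonicity).} Within a round, $R_i^{\rm new}$ is only ever overwritten by some $C\succ R_i^{\rm new}$. Hence, across rounds, the sequence of values of $R_i$ is nondecreasing in $\succeq$. A voter with $D_i=\bot$ at the end was never updated, so $R_i=(0,0)$. A voter with $D_i\neq\bot$ satisfies $R_i\succ(0,0)$, because its last update used some $C\succeq(0,0)$ with $C\succ R_i^{\rm new}\succeq(0,0)$.

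\emph{Step 2 (stale-certificate invariant).} Let $D_i=v_j$ be set in round $t$. Then $R_i=R_j^{(t-1)}+e_{ij}$, where $R_j^{(t-1)}$ is the value of $R_j$ after round $t-1$. By Step 1, $R_j^{(t-1)}\preceq R_j^{\rm final}$, so at termination
$$R_i\preceq R_{D_i}+e_{i D_i}.$$
Now follow a realised chain $u_0\to u_1\to\cdots\to u_k$ with $u_k$ a sink. Telescoping gives
$$\sum_{m<k}e_{u_mu_{m+1}}\succeq R_{u_0}-R_{u_k}=R_{u_0}\succeq(0,0).$$
The same telescoping holds for every suffix of the chain. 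If instead the pointers close a cycle, telescoping around it shows that the cycle's edge scores sum to something $\succeq(0,0)$.

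\emph{Step 3 (from certificates to competences).} Take any directed path of admissible edges whose score sum is $\succeq(0,0)$. Fix a state $b$, and let $a_b$ be the number of its edges lying in $J^b$ and $n_b$ the number of remaining edges; the latter are neutral, not confounded, because the edges are admissible. Nonnegativity of the sum gives $a_b\ge n_b$. Since the path is nonempty and $a_b+n_b$ equals its length, we get $a_b\ge1$. The competence change along the path is therefore at least
$$a_b\alpha_b-n_b\beta_b\ge a_b(\alpha_b-\beta_b)\ge\alpha_b-\beta_b>0.$$
Applied to the realised chain, this gives the statewise terminal bound $p^b_{\tau(i)}-p^b_i\ge\alpha_b-\beta_b$. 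Applied to a hypothetical cycle, it would give a strict increase of $p^b$ around a closed path, which is impossible; this proves acyclicity.

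\emph{Step 4 (path length and weights).} The competence change is also at least $a_b\alpha_b-n_b\beta_b\ge\tfrac{\ell}{2}(\alpha_b-\beta_b)$ for a path of length $\ell=a_b+n_b$ (using $a_b\ge\ell/2$). This change is at most $1$, so $\ell\le 2/(\alpha_b-\beta_b)$ for each $b$, and therefore $\ell\le 2\min_b 1/(\alpha_b-\beta_b)\le T$. With in-degree at most $\Delta_{\rm in}$, the voters reaching a sink form a reverse tree of depth at most $T$, which gives $w\le W_T$. Then $W_2(H)=\sum w_i^2\le W_T\sum w_i=W_Tn$.

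The main obstacle is Step 2. The final pointer $D_i$ was chosen using the value of $R_j$ in an earlier round, and the realised chain need not coincide with any path the algorithm actually certified. The point to check is that the strict-improvement rule makes stale certificates pessimistic, so the telescoping inequality still points the right way. A single edge score may be $(1,-1)$, so $R_i$ need not dominate $R_j$ and one cannot argue edge by edge. This is why acyclicity is derived from the aggregated cycle sum combined with the competence bound of Step 3, rather than from a potential that increases on each edge.
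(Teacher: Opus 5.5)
Your proposal is correct and follows essentially the same route as the paper. Monotonicity of the certificates gives the stale-certificate inequality $R_i\preceq R_{D_i}+e_{iD_i}$, and telescoping it around a cycle, or along a chain to a sink where $R=(0,0)$, yields a nonnegative score sum; counting $+1$ versus $-1$ edges then rules out cycles, gives the terminal bound $\alpha_b-\beta_b$, and caps path length at $2/(\alpha_b-\beta_b)\le T$. Your unified estimate $a_b\alpha_b-n_b\beta_b\ge a_b(\alpha_b-\beta_b)$ is just a slightly tidier form of the paper's $(a_b-c_b)\alpha_b+c_b(\alpha_b-\beta_b)$ calculation.
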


\begin{proof}
The algorithm uses exactly $T$ synchronous communication rounds, so the stated round complexity is immediate from the definition of $T$.

For an edge selected by the final delegation graph, say $v_i\to v_j$, the update rule implies
$$R_i\le R_j+e_{ij}$$

coordinatewise at the end of the algorithm. This is because $v_i$ adopted $v_j$ in some round using a previous value of $R_j$, and all $R$-vectors are coordinatewise nondecreasing over time.

We first prove acyclicity. Suppose, for contradiction, that the final delegation graph contains a directed cycle $v_0\to v_1\to\cdots\to v_{\ell-1}\to v_0$. Summing the displayed inequalities around the cycle gives
$$\sum_{t=0}^{\ell-1} e_{v_t v_{t+1}}\ge (0,0)$$

coordinatewise, with indices modulo $\ell$. Fix a state $b$. Let $a_b$ be the number of cycle edges with $e^b=1$ and $c_b$ the number with $e^b=-1$. The inequality gives $a_b\ge c_b$. Along such an edge, the actual competence change in state $b$ is at least $\alpha_b$ when $e^b=1$ and at least $-\beta_b$ when $e^b=-1$. Therefore the total competence change around the cycle is at least
$$a_b\alpha_b-c_b\beta_b
= (a_b-c_b)\alpha_b+c_b(\alpha_b-\beta_b)>0,$$

because the cycle is nonempty and $\alpha_b>\beta_b$. But the total change of $p^b$ around a closed cycle is exactly zero, a contradiction.

Now let $v_i=v_0\to v_1\to\cdots\to v_\ell=\tau(i)$ be a realised delegation path ending at a sink. A sink has $D=\bot$, and by the update rule this implies $R_{\tau(i)}=(0,0)$. Summing $R_{v_t}\le R_{v_{t+1}}+e_{v_t v_{t+1}}$ along the path gives
$$\sum_{t=0}^{\ell-1}e_{v_t v_{t+1}}\ge R_i\ge(0,0)$$

coordinatewise. Since the path is nonempty, for every state $b$ the number of $+1$ edges is at least the number of $-1$ edges, and the same calculation as above gives

$$p_{\tau(i)}^b-p_i^b\ge \alpha_b-\beta_b.$$

This proves terminal statewise improvement.

It remains to bound terminal weights. Along every realised delegation path the path-score is nonnegative in both coordinates. For a fixed state $b$, if $a_b$ and $c_b$ are the numbers of $+1$ and $-1$ edges on the path, then $a_b\ge c_b$ and
$$1\ge p_{\tau(i)}^b-p_i^b\ge a_b\alpha_b-c_b\beta_b\ge a_b(\alpha_b-\beta_b).$$

Hence $a_b\le1/(\alpha_b-\beta_b)$ and the path length is $a_b+c_b\le2a_b\le2/(\alpha_b-\beta_b)$. Taking the better of the two signal states gives path length at most $T$. If the maximum in-degree is $\Delta_{\mathrm{in}}$, at most $1+\Delta_{\mathrm{in}}+\cdots+\Delta_{\mathrm{in}}^T$ vertices can reach a fixed sink within $T$ directed steps. This proves the weight bound.

Finally, bounded terminal weights imply $W_2(H)\le W_T n$ for every realised delegation graph $H$. Let $d(H)$ denote the number of voters who delegate in $H$, and let $\mu_H^b=\sum_{v_i\in T(H)}w_i(H)p_i^b$. Terminal statewise improvement gives
$$\mu_H^b\ge \mu_D^b+(\alpha_b-\beta_b)d(H),$$

where $\mu_D^b=\sum_i p_i^b$ is the direct-voting statewise mean. Therefore Lemma~\ref{lem:weighted-hoeffding} yields the explicit conditional bound
$$\Pr\!\left(Y_H^b<\frac n2\mid H,s=b\right)
\le
\exp\!\left(
-\frac{2\bigl(\mu_D^b-n/2+(\alpha_b-\beta_b)d(H)\bigr)_+^2}{W_T n}
\right).$$

This is the advertised final concentration step: whenever the right-hand margin is $\omega(\sqrt n)$, the failure probability tends to zero; the remaining critical-window case is precisely the bounded-weight scalar case handled by the variance-preservation argument of~\cite{ChatterjeeG0SY25variance}. Together with terminal statewise improvement, this is exactly the bounded-weight, statewise-improvement structure required by the scalar concentration argument. \qed
\end{proof}

\begin{corollary}[Explicit conditional concentration]\label{cor:communication-hoeffding}
Let $H$ be a realised output of Algorithm~\ref{alg:more_communication}, let $d(H)$ be its number of delegating voters, and let $\mu_D^b=\sum_i p_i^b$. Then
$$\Pr\!\left(Y_H^b<\frac n2\mid H,s=b\right)
\le
\exp\!\left[-\frac{2\bigl(\mu_D^b-n/2+(\alpha_b-\beta_b)d(H)\bigr)_+^2}{W_Tn}\right].$$
In particular, if the displayed statewise margin is $\omega(\!\sqrt n)$ with probability $1-o(1)$, then the statewise failure probability is $o(1)$.
\end{corollary}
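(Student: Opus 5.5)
The plan is to condition on the realised delegation graph $H$ and the signal $s=b$, and then apply the weighted Hoeffding inequality (Lemma~\ref{lem:weighted-hoeffding}) to $Y_H^b=\sum_{v_i\in T(H)}w_i(H)X_i^b$.

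\emph{Conditioning.} Once $H$ is fixed, the sink votes $X_i^b$ are independent Bernoulli$(p_i^b)$ variables. This holds because the mechanism's randomisation is independent of the votes, and delegation is decided before the signal. Each term $w_i(H)X_i^b$ lies in $[0,w_i(H)]$, and its conditional mean is
$$\mathbb E[Y_H^b\mid H,s=b]=\mu_H^b=\sum_{v_i\in T(H)}w_i(H)p_i^b.$$

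\emph{Tail bound.} Hoeffding's inequality for bounded independent summands gives
$$\Pr(Y_H^b<n/2\mid H,s=b)\le\exp\!\bigl(-2(\mu_H^b-n/2)_+^2/W_2(H)\bigr),$$
where $W_2(H)=\sum_i w_i(H)^2$. The positive part handles the case $\mu_H^b\le n/2$, where the bound is trivially $1$.

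\emph{Lower bound on the mean.} I would write $\mu_H^b=\sum_i p^b_{\tau(i)}$ and use the terminal statewise improvement from \Cref{thm:more_communication}. It gives $p^b_{\tau(i)}\ge p_i^b+(\alpha_b-\beta_b)$ for every delegating voter, and trivially $p^b_{\tau(i)}=p_i^b$ for every sink. Summing over voters yields
$$\mu_H^b-n/2\ge \mu_D^b-n/2+(\alpha_b-\beta_b)d(H).$$
Since $x\mapsto x_+^2$ is nondecreasing, this lower bound can replace $\mu_H^b-n/2$ in the exponent.

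\emph{Denominator.} The theorem gives $W_2(H)\le W_T n$: every sink weight is at most $W_T$ and $\sum_i w_i=n$. Raising the denominator only weakens the bound, which produces the displayed inequality.

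\emph{The ``in particular'' clause.} Let $E$ be the event that the statewise margin is at least $g(n)\sqrt n$ with $g(n)\to\infty$. Then
$$P(\text{fail}\mid s=b)\le \Pr(E^c)+\sup_{H\in E}\exp\!\bigl(-2g(n)^2/W_T\bigr).$$
Both terms are $o(1)$, provided $W_T$ is a constant. This requires $\Delta_{\rm in}$ and the $\alpha_b,\beta_b$ to be fixed independently of $n$, which I would state explicitly.

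The main obstacle is justifying the conditioning step: conditional independence of the sink votes given $H$, and the applicability of Hoeffding with ranges $w_i$ for random $H$. The rest is direct substitution.
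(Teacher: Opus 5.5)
Your proposal is correct and matches the paper's proof: the terminal statewise improvement of \Cref{thm:more_communication} gives $\mu_H^b\ge\mu_D^b+(\alpha_b-\beta_b)d(H)$, and the weighted Hoeffding bound of \Cref{lem:weighted-hoeffding}, combined with $W_2(H)\le W_Tn$, then yields the display. Your added details (monotonicity of $x\mapsto x_+^2$, and the need for $W_T$ to be independent of $n$ in the ``in particular'' clause) fill in what the paper leaves implicit; the conditioning step you flag as the main obstacle is immediate, because Algorithm~\ref{alg:more_communication} is deterministic and so $H$ is fixed by the instance.
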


\begin{proof}
The terminal-improvement statement gives
$$\mu_H^b=\sum_i p_{\tau(i)}^b\ge\mu_D^b+(\alpha_b-\beta_b)d(H).$$
Apply the weighted Hoeffding bound in~\Cref{lem:weighted-hoeffding} and $W_2(H)\le W_Tn$. \qed
\end{proof}

The mechanism is therefore $T$-round local rather than globally non-local: the decision at $v_i$ depends only on information propagated from a radius-$T$ neighbourhood. It offers stronger per-realisation structural guarantees than Algorithm~\ref{alg:two_dimensions}, at the cost of communication and of discarding incomparable certificates.

\section{Discussion}\label{sec:disc}

\smallskip\noindent{\bf  Multi-valued signals.}
The binary common-signal model extends algebraically to a finite signal space $s\in\{1,\ldots,m\}$ with prior probabilities $\pi_b$ and statewise competences $p_i^b$. Define
$$J^b(i)=\{v_j\in Ne(v_i):p_j^b-p_i^b\ge\alpha_b\},\qquad
K^b(i)=\{v_j\in Ne(v_i):p_i^b-p_j^b>\beta_b\}.$$

The conservative mechanism uses $\cap_{b=1}^m J^b(i)$, which is safe but can become increasingly conservative as $m$ grows.

A multi-valued analogue of the confounded-set mechanism chooses a target state $B\in\{1,\ldots,m\}$ according to a design distribution $\lambda=(\lambda_1,\ldots,\lambda_m)$ and delegates to a voter in
$$A^B(i)=J^B(i)\cap\bigcap_{c\ne B}\bigl(Ne(v_i)\setminus K^c(i)\bigr),$$

provided this set is sufficiently large. Conditional on using this branch, the expected competence margin in state $c$ is at least
$$\lambda_c\alpha_c-(1-\lambda_c)\beta_c.$$

Thus all statewise expected margins are positive whenever
$$\lambda_c>\frac{\beta_c}{\alpha_c+\beta_c}\qquad\text{for all }c.$$

\begin{lemma}[Feasible design distribution for finite signals]\label{lem:multi-feasible}
A feasible design distribution exists whenever
$$\sum_{c=1}^m\frac{\beta_c}{\alpha_c+\beta_c}<1.$$
\end{lemma}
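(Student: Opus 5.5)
The plan is a direct construction. Set $t_c=\beta_c/(\alpha_c+\beta_c)$ for each state $c$. Since $\alpha_c>0$ and $\beta_c\ge0$, each $t_c$ lies in $[0,1)$. The hypothesis says exactly that the slack
$$\sigma=1-\sum_{c=1}^m t_c$$
is strictly positive. A feasible design distribution only has to satisfy $\lambda_c>t_c$ for every $c$ and $\sum_c\lambda_c=1$ with $\lambda_c\ge0$.

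I would spread the slack evenly and define
$$\lambda_c=t_c+\frac{\sigma}{m},\qquad c=1,\ldots,m.$$
The verification then has three routine steps, taken in order:
\begin{enumerate}
\item Nonnegativity, and strict inequality $\lambda_c>t_c$, both hold because $\sigma/m>0$.
\item The weights sum to one: $\sum_c\lambda_c=\sum_c t_c+\sigma=1$.
\item Positivity of every margin follows from $\lambda_c>t_c$, which is equivalent to $\lambda_c(\alpha_c+\beta_c)>\beta_c$. Rearranging gives $\lambda_c\alpha_c-(1-\lambda_c)\beta_c>0$. This is the statewise expected margin computed just before the lemma, by the same averaging argument as in \Cref{prop:expected-margin-lambda}.
\end{enumerate}

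It is also worth noting that the condition is necessary. If $\lambda_c>t_c$ for all $c$, summing gives $1>\sum_c t_c$. So the lemma is really an equivalence, matching the "if and only if" in \Cref{prop:expected-margin-lambda}.

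As a sanity check, for $m=2$ the condition $t_0+t_1<1$ simplifies to $\beta_0\beta_1<\alpha_0\alpha_1$, which is consistent with the binary case.

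There is no real obstacle here. The only points needing care are the edge case $\beta_c=0$, where $t_c=0$ and the constraint is simply $\lambda_c>0$, and making sure that strict inequality survives normalisation. Distributing a strictly positive slack handles both.
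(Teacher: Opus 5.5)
Your proposal is correct and follows essentially the same argument as the paper. The paper re-derives the statewise margin $\lambda_c\alpha_c-(1-\lambda_c)\beta_c$ by averaging over the target state, then simply observes that feasibility amounts to the lower bounds $\beta_c/(\alpha_c+\beta_c)$ summing to less than one; your explicit choice $\lambda_c=t_c+\sigma/m$ makes that existence step concrete, and you also verify the converse.
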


\begin{proof}
When the selected target state is $c$, membership in $J^c(i)$ gives a gain of at least $\alpha_c$ in state $c$. When the selected target state is some $b\ne c$, membership in $Ne(v_i)\setminus K^c(i)$ gives a loss of at most $\beta_c$ in state $c$. Averaging over the design distribution gives the displayed margin. The feasibility condition is exactly the condition that the lower bounds on the $\lambda_c$'s sum to less than one. \qed
\end{proof}

\begin{lemma}[Acyclicity for the finite-signal confounded mechanism]\label{lem:multi-acyclic}
Under the condition of \Cref{lem:multi-feasible}, the finite-signal confounded mechanism admits a strictly increasing scalar potential and hence induces an acyclic delegation graph.
\end{lemma}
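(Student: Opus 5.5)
The plan is to mimic the proof of \Cref{lem:potential-acyclicity}: find positive weights $r_1,\ldots,r_m$ such that the linear potential
$$\Phi_i=\sum_{c=1}^m r_c\,p_i^c$$
strictly increases, by a uniform positive amount, along every edge the finite-signal confounded mechanism can select. A strict increase of $\Phi$ rules out cycles.

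\textbf{Reducing to one family of inequalities.} The mechanism can select edges of two kinds.

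\emph{Intersection edges.} If a conservative branch using $\bigcap_b J^b(i)$ is present, every coordinate rises by at least $\alpha_c$. Then $\Phi_j-\Phi_i\ge\sum_c r_c\alpha_c>0$ for any positive weights.

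\emph{Mixed edges.} Suppose $v_j\in A^B(i)$ for a target state $B$. Then $p_j^B-p_i^B\ge\alpha_B$. For $c\ne B$, $v_j\notin K^c(i)$ gives $p_j^c-p_i^c\ge-\beta_c$. Hence
$$\Phi_j-\Phi_i\ge r_B\alpha_B-\sum_{c\ne B}r_c\beta_c .$$
So it suffices to pick $r_c>0$ making this right-hand side positive for every $B\in\{1,\ldots,m\}$ at once. Note that the design distribution $\lambda$ plays no role here; only set membership matters.

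\textbf{Choosing the weights (the key step).} This is the only nontrivial point, since it needs $m$ inequalities to hold simultaneously. Write $\theta_c=\beta_c/(\alpha_c+\beta_c)$. Adding $r_B\beta_B$ to both sides, the requirement becomes
$$r_B(\alpha_B+\beta_B)>\sum_{c=1}^m r_c\beta_c .$$
Normalise by setting $r_c=1/(\alpha_c+\beta_c)$. The left side then equals $1$ for every $B$, and the right side equals $\sum_c\theta_c$, which is less than $1$ exactly by the hypothesis of \Cref{lem:multi-feasible}. So every mixed edge raises $\Phi$ by at least
$$\eta=1-\sum_c\theta_c>0,$$
and every intersection edge raises it by at least $\sum_c\alpha_c/(\alpha_c+\beta_c)>0$.

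\textbf{Concluding.} In any realised delegation graph, a directed cycle would force $\Phi$ to strictly increase around a closed path, which is impossible. Hence every realised delegation graph is acyclic.

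As a by-product, boundedness gives more. Since $0\le p_i^c\le1$, the potential lies in $[0,\sum_c r_c]$. Together with the uniform increment this bounds path lengths as in \Cref{cor:alg2-bounded-weight}, although the statement does not require that bound.
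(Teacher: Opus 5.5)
Your proposal is correct and matches the paper's proof essentially verbatim: the same weights $r_c=1/(\alpha_c+\beta_c)$, the same bound $\Phi_j-\Phi_i\ge r_B\alpha_B-\sum_{c\ne B}r_c\beta_c=1-\sum_c\beta_c/(\alpha_c+\beta_c)>0$ for $v_j\in A^B(i)$, and the same closed-cycle contradiction. Your separate treatment of intersection edges is harmless but redundant: since $\alpha_c>0$ forces $J^c(i)\cap K^c(i)=\emptyset$, any such edge already lies in every $A^B(i)$.
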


\begin{proof}
Set
$$r_c=\frac{1}{\alpha_c+\beta_c},\qquad \Phi_i=\sum_{c=1}^m r_c p_i^c.$$
If $v_i$ delegates to $v_j\in A^B(i)$, then $p_j^B-p_i^B\ge\alpha_B$, while for every $c\ne B$ we have $p_j^c-p_i^c\ge -\beta_c$. Hence
$$\Phi_j-\Phi_i
\ge r_B\alpha_B-\sum_{c\ne B} r_c\beta_c
= r_B(\alpha_B+\beta_B)-\sum_{c=1}^m r_c\beta_c
=1-\sum_{c=1}^m\frac{\beta_c}{\alpha_c+\beta_c}>0.$$
The potential strictly increases along every delegation edge, so directed cycles are impossible. \qed
\end{proof}

The finite-signal extension therefore preserves the same local expected-margin and acyclicity structure as the binary model. It does not, by itself, solve the variance-preservation problem; the number of signal states may also make the intersection and cross-state acceptable sets too small unless the competence geometry has additional structure. 
We leave finalising this extension for future work.

\smallskip\noindent{\bf Decomposing any correlation.}
The common-signal model should not be interpreted as a complete model of arbitrary correlated competence. Any joint distribution over binary correctness vectors can be represented as a mixture of product distributions by introducing a sufficiently rich latent variable; in the extreme, the latent variable may index the entire correctness vector. This observation is representational rather than algorithmic: the support of the latent variable may be exponentially large, and the resulting states need not have an interpretable relationship to local delegation information. The present paper therefore focuses on finite, interpretable common-signal models. Extending the analysis to arbitrary bounded dependence would likely require covariance-based or dependency-graph concentration inequalities rather than a direct reduction to statewise independent models.


\section{Conclusion}\label{sec:conclusion}
We study liquid democracy in a minimal common-signal model of correlated voter competence. Conditional on the signal, votes are independent; before conditioning, the common signal induces correlation in correctness events. Even in this parsimonious model, scalar average competence is not a reliable delegation criterion. It can recommend delegations that asymptotically harm majority correctness, and it can rule out delegations that are essential for positive gain.

The conservative way to extend scalar favoured sets is to require statewise dominance. This gives a clean statewise reduction but can be too conservative under negative association across signal states. Confounded sets provide a more flexible alternative: they distinguish neighbours who are genuinely unsafe from neighbours who merely fail to be favoured in a particular state. The resulting mechanism has positive expected terminal-competence margins and admits a potential-function acyclicity proof. The remaining mathematical task, made explicit here rather than hidden in the proof, is the variance-preservation step: one must show that the induced terminal weights do not concentrate sufficiently to destroy the weighted-majority advantage.

For bounded-degree graphs, the certified-path mechanism offers a more direct route. It permits temporary movement through neutral vertices, but only along paths whose cumulative certificate is nonnegative in both signal states. This gives statewise terminal improvement and, because the graph degree and certificate radius are bounded, a uniform sink-weight bound. This is the most promising path for a complete version of the distributed result.

Several problems remain. Finite multi-signal models are algebraically straightforward but may become conservative as the number of states grows. Arbitrary dependence cannot be handled merely by invoking a latent-variable representation; useful results for broad dependence classes will likely require covariance-based or dependency-graph concentration inequalities. Finally, the present analysis is non-strategic. A complete theory of correlated liquid democracy should address how competence labels are learned, how voters respond strategically to labels, and how cycle resolution interacts with epistemic performance.

\bibliographystyle{alpha}
\bibliography{refs}

\appendix
\section{Graph properties}\label{sec:prop}
We use the following graph and competence restrictions when importing scalar liquid-democracy guarantees. A problem instance may satisfy one or more of these properties.
\begin{description}
    \item[$K_n$:] the awareness graph is the complete graph on $n$ vertices.
    \item[$\rand(n,d)$:] the awareness graph is a random $d$-regular graph, generated independently of the realised votes after the competence vectors are assigned.
    \item[$\Delta\le k$:] the awareness graph has maximum directed degree at most $k$; when applying \Cref{alg:more_communication}, it suffices to bound the maximum in-degree.
    \item[$\delta\ge k$:] the awareness graph has minimum out-degree at least $k$.
    \item[$PC=a$:] the statewise competence means are close to the majority threshold, e.g.
    $$
    \frac12\ge \frac1n\sum_{i=1}^n p_i^0\ge \frac12-a,
    \qquad
    \frac12\ge \frac1n\sum_{i=1}^n p_i^1\ge \frac12-a.
    $$
    \item[$\mathbf p^0,\mathbf p^1\in(\beta,1-\beta)$:] every competence parameter lies in $(\beta,1-\beta)$ for some $\beta\in(0,1/2)$.
\end{description}
The first four properties are graph-topological, while the last two constrain competence vectors. The precise parameter regimes should be matched to the scalar theorems being invoked from~\cite{ChatterjeeG0SY25variance}.

\section{Scalar baseline used by the reductions}\label{app:scalar-baseline}
The reductions in the main text use the scalar theory of~\cite{ChatterjeeG0SY25variance} as a black box only after verifying two ingredients state by state. First, every realised delegation edge must have the required scalar competence margin. Second, the induced terminal weights and the random terminal mean must satisfy the variance-preservation conditions of the relevant scalar theorem. The common-signal arguments---failure of averaging, cross-state feasibility, potential-function acyclicity, propagation to terminal sinks, and certified-path bounds---are proved in this paper.

The intersection reduction additionally requires the scalar result to tolerate a sufficiently large subset of approved neighbours. We isolate that condition in Definition~\ref{def:subset-robust}; it is automatic whenever the scalar theorem quantifies over arbitrary local mechanisms, but it should not be inferred for a theorem tied to one particular sampling distribution. Algorithm~\ref{alg:two_dimensions} is therefore stated as a structural theorem plus a conditional concentration reduction rather than as an unconditional invocation of every result in~\cite{ChatterjeeG0SY25variance}.

\section{Useful concentration bound}\label{app:concentration}
\begin{lemma}[Weighted Hoeffding diagnostic]\label{lem:weighted-hoeffding}
Fix a delegation graph $H$ and a signal state $b$. Let $T(H)$ be the sinks, let $w_i$ be the weight of sink $v_i$, and define
$$\mu_H^b=\sum_{v_i\in T(H)} w_i p_i^b,
\qquad
W_2(H)=\sum_{v_i\in T(H)}w_i^2.$$

Conditional on $s=b$ and $H$, the weighted vote $Y_H^b=\sum_{v_i\in T(H)}w_iX_i^b$ satisfies
$$\Pr\left(Y_H^b<\frac n2\mid H,s=b\right)
\le
\exp\left(-\frac{2(\mu_H^b-n/2)_+^2}{W_2(H)}\right),$$
where $(x)_+=\max\{x,0\}$.
\end{lemma}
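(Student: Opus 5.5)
This is the standard weighted Hoeffding bound. Once we condition on the realised delegation graph $H$ and on $s=b$, the weights $w_i$ and the sink set $T(H)$ are fixed. The only remaining randomness is in the sink votes $X_i^b$, which by the model are independent Bernoulli$(p_i^b)$. The plan is to apply Hoeffding's inequality to the sum of the independent bounded variables $Z_i=w_iX_i^b$, where each $Z_i$ takes values in $[0,w_i]$.

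First, I would dispose of the trivial case. If $\mu_H^b\le n/2$, then $(\mu_H^b-n/2)_+=0$, the right-hand side equals $1$, and the bound holds because any probability is at most $1$. So assume $t:=\mu_H^b-n/2>0$.

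Next, observe that $\mathbb E[Y_H^b\mid H,s=b]=\sum_{v_i\in T(H)}w_ip_i^b=\mu_H^b$. The event $\{Y_H^b<n/2\}$ is therefore contained in $\{Y_H^b-\mathbb E Y_H^b\le -t\}$. Hoeffding's lower-tail inequality for independent $Z_i\in[a_i,b_i]$ gives
$$\Pr\Bigl(\sum_i Z_i-\mathbb E\sum_i Z_i\le -t\Bigr)\le\exp\!\left(-\frac{2t^2}{\sum_i(b_i-a_i)^2}\right).$$
Here $b_i-a_i=w_i$, so the denominator is $\sum_{v_i\in T(H)}w_i^2=W_2(H)$. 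Substituting $t=(\mu_H^b-n/2)_+$ yields the claim.

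The only point needing care is to confirm that conditioning on $H$ leaves the $X_i^b$ independent with the stated marginals. This requires the mechanism's randomisation to be independent of the votes. That holds because delegation is fixed before the signal and the votes are drawn afterwards, independently given $s=b$. With that in place, the derivation is routine and has no real obstacle.
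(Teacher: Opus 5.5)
Your proposal is correct and follows the same route as the paper: Hoeffding's inequality applied to the conditionally independent variables $w_iX_i^b\in[0,w_i]$, whose squared ranges sum to $W_2(H)$. Your separate treatment of the trivial case $(\mu_H^b-n/2)_+=0$ and your check of conditional independence given $H$ spell out details that the paper leaves implicit.
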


\begin{proof}
This is Hoeffding's inequality for independent bounded variables $w_iX_i^b\in[0,w_i]$, conditional on the realised delegation graph and on the signal state. The sum of squared ranges is $W_2(H)$.
\end{proof}

\end{document}

%% file: packages.tex
\usepackage{tikz}
\usepackage{graphicx}

\usepackage{amsthm}
\usepackage{amsmath,amsfonts}

\usepackage{ifthen}
\usetikzlibrary{calc}
\usepackage{multirow}
\usepackage{wrapfig}
\usepackage{hyperref}
\usepackage{cleveref}

\usepackage{tabularx}
\usepackage{tcolorbox}
\usepackage[table]{xcolor}
\usepackage{makecell}
\newcommand{\PreserveBackslash}[1]{\let\temp=\\#1\let\\=\temp}
\newcolumntype{C}[1]{>{\PreserveBackslash\centering}p{#1}}
\newcolumntype{R}[1]{>{\PreserveBackslash\raggedleft}p{#1}}
\newcolumntype{L}[1]{>{\PreserveBackslash\raggedright}p{#1}}

\usepackage{graphicx}
\usepackage{comment}
\graphicspath{ {img/} }

\usetikzlibrary{automata,positioning,fit,shapes.geometric,backgrounds}

\usepackage{csquotes}
\usepackage{booktabs}
\newcolumntype{Y}{>{\raggedright\arraybackslash}X}

\usepackage{mathtools}

\usepackage{algorithmicx,algorithm}
\usepackage[noend]{algpseudocode}
\usepackage{thm-restate}

\newcommand{\gain}{\operatorname{gain}}

\newcommand{\rand}{\mathrm{Rand}}